\newtheorem{definition}{Definition}[section]
\newtheorem{proposition}{Proposition}[section]
\newtheorem{corollary}{Corollary}[section]
\newtheorem{theorem}{Proposition}[section]
\newtheorem{lemma}{Lemma}[section]
\definecolor{JNcolor}{rgb}{0.9,0.1,0.1}
\definecolor{SSScolor}{rgb}{0,0,1}
\begin{document}

\title{Topological entanglement entropy meets holographic entropy inequalities}
\author{Joydeep Naskar}
\email{naskar.j@northeastern.edu}
\affiliation{Department of Physics, Northeastern University, Boston, MA 02115, USA}
\affiliation{The NSF AI Institute for Artificial Intelligence and Fundamental Interactions, Cambridge, MA, U.S.A.}

\author{Sai Satyam Samal}
\email{samals@purdue.edu}
\affiliation{Department of Physics and Astronomy, Purdue University, West Lafayette, IN 47907, USA}

\date{\today}

\begin{abstract}
Topological entanglement entropy (TEE) is an efficient way to detect topological order in the ground state of gapped Hamiltonians. The seminal work of Kitaev and Preskill~\cite{preskill-kitaev-tee} and simultaneously by Levin and Wen~\cite{levin-wen-tee} proposed separate definitions of TEE based on a subtraction scheme. In the present work, we explain why the subtraction schemes work for the computation of TEE and generalize them for arbitrary number of subregions by explicitly noting the necessary conditions for an information quantity to capture TEE. Our analysis puts the two definitions~\cite{levin-wen-tee,preskill-kitaev-tee} into separate classes. Focusing on cyclic information quantities $\mathcal{Q}_{2n+1}$ and multi-information $\mathcal{I}_n$, we propose a generalized framework for defining TEE. We also show that the holographic entropy inequalities are satisfied by the quantum entanglement entropy of the non-degenerate ground state of a topologically ordered two-dimensional medium with a mass gap.

\end{abstract}

\maketitle





\section{Introduction}

Understanding topological phase transitions and detecting the topological order have been one of the most exciting research directions since the discovery of the fractional quantum hall (FQH) effect~\cite{PhysRevLett.48.1559}. A topological phase transition falls beyond the Landau-Ginzburg paradigm~\cite{10.1093/acprof:oso/9780199227259.001.0001} and hence requires a new approach~\cite{10.21468/SciPostPhys.13.5.114,annurev:/content/journals/10.1146/annurev-conmatphys-040721-021029}. 
A topological phase is protected by a finite energy gap at zero temperature~\cite{10.1093/acprof:oso/9780199227259.001.0001} and closing the gap by continuously tuning the parameters in the Hamiltonian would lead to a different topological phase~\cite{Alicea_2012}. These exotic phases are associated with the presence of gapped excitations called anyons~\cite{leinaas_theory_1977,PhysRevLett.49.957,PhysRevLett.53.722,doi:10.1142/5752}. These quasi-particles have braiding statistics that is intermediate between bosons and fermions~\cite{KITAEV20062,RevModPhys.80.1083,rao2016introduction} and have recently been experimentally verified in FQH liquids~\cite{Nakamura_2020,doi:10.1126/science.aaz5601,PhysRevX.13.011031}.

However, we must point out that looking at anyonic excitations is not the only way to characterize the topological phase. Ground state degeneracy also provides another quantum number that characterizes different topological phases. It was shown that the ground state degeneracy of FQH states (filling factor $\nu=\frac{1}{m}$) goes as $m^{g}$ on a Reimann manifold with $g$ genus~\cite{PhysRevB.41.9377} and is non-degenerate on a sphere~\cite{PhysRevLett.51.605}. Interestingly, the Von-Neumann (VN) entanglement entropy of the ground state also gives information about the topological phase~\cite{preskill-kitaev-tee,levin-wen-tee}. The VN entropy of gapped ground state in the thermodynamic limit goes as the area of the system~\cite{srednicki-area-law,area-law-review} with correction, $\gamma$, that is universal and independent of the area of the sub-system,
\begin{align}\label{eq:Stopo_def1}
        S(\rho)=\alpha L - \gamma + \cdots \,.
\end{align}
The ellipsis terms vanish as the system size goes to infinity. The universal constant, $\gamma$, captures the non-local feature of the ground state and encodes information about the topological phase~\cite{preskill-kitaev-tee}.

Intuitively, in order to probe the topological entanglement entropy (TEE), one can use subtraction scheme similar to Ref.~\cite{preskill-kitaev-tee,levin-wen-tee} which is invariant under any smooth deformations. It cancels out all the boundary contributions in Eq.~\eqref{eq:Stopo_def1} and leaves only the non-local contributions (proportional to $\gamma$) to the TEE. This motivates us to ask the following question: are there other information quantities that can be regarded as equivalent definitions of the TEE and thereby detecting topological order and if yes, what are the general properties of such information quantities? By borrowing ideas from holography, we propose new information quantities that can be chosen as equivalent definitions of the TEE and also comment on the necessary conditions which must be satisfied for being a topological information quantity. We will be restricting ourselves to case where the ground state is non-degenerate or equivalently our system is on a disk in two spatial dimensions. Since the ground state of the gapped Hamiltonian can be described by an effective topological quantum field theory (TQFT). We will be presenting a TQFT calculation on a disk-like geometry following Ref.~\cite{preskill-kitaev-tee} and show that these are new topological information quantities.

\subsection{Definition of topological entanglement entropy}
In the literature, there is no consensus on the \emph{precise} definition of TEE. We will review the three commonly used definitions as follows:
\begin{itemize}
    \item The first definition comes from the area law, Eq.~\eqref{eq:Stopo_def1}. The additive constant term $-\gamma<0$ is defined as the TEE, and is believed to capture some global features of entanglement in the ground state. We will use this quantity $\gamma$ as the definition of TEE, i.e, $S_{\text{topo}}=\gamma$. Our sense of \emph{`topological'} means that $S_{\text{topo}}$ depends only on global features and not the system's dimensions, when talking about area law. In the context of anyons, we mean that $S_{\text{topo}}$ only depends on the total quantum dimension $\mathcal{D}$ and $\gamma=\log{\mathcal{D}}$. It is in some sense a strong assumption that the ground state is described by a TQFT. Nevertheless, we will proceed with this.
    
    \item The second definition comes from the work of Levin and Wen~\cite{levin-wen-tee} where TEE is defined as a tripartite information quantity called the conditional mutual information, Eq.~\eqref{eq:Stopo_def2},
    \begin{equation}\label{eq:Stopo_def2}
    \begin{aligned}
        S_{\text{topo}}^{\text{LW}}=&S(AB)+S(BC)-S(B)-S(ABC)\,,
    \end{aligned}
    \end{equation}
     computed on a geometry as shown in Fig.~\ref{fig:kitaev_wen}b. In the Eq.~\eqref{eq:Stopo_def2}, $S(X)$ is the VN entropy associated with region $X$. However, on changing this particular geometry (for example, permuting labels of regions or choosing a disk) renders this quantity useless. We would like to point out that the information quantity used in this case i.e., conditional mutual information is related to the entropy inequality called the \emph{strong sub-additivity} (SSA)~\cite{ssa1973}.

    \item The work of Kitaev and Preskill~\cite{preskill-kitaev-tee} motivated a third definition of the TEE involving a different tripartite information quantity related to an entropy inequality called \emph{monogamy of mutual information} (MMI)~\cite{MMI2013} as follows,
    \begin{equation}\label{eq:Stopo_def3}
    \begin{aligned}
        S_{\text{topo}}^{\text{KP}}=&S(AB)+S(AC)+S(BC)\\
        &-S(A)-S(B)-S(C)-S(ABC),
    \end{aligned}
    \end{equation}
    where we choose the sign convention to set $S_{\text{topo}}\geq 0$. This was first evaluated on a disk (see Fig.~\ref{fig:kitaev_wen}a and a non-simply connected geometry, Fig.~\ref{fig:kitaev_wen}b. It was found to be invariant under permutation of labels of regions. The topological property of $S_{\text{topo}}^{\text{KP}}$ holds for different geometries, however, the value $S_{\text{topo}}^{\text{KP}}$ depends on the topology (more specifically, on the number of connected components of the geometry). We call such quantities \emph{fixed-topology TEE probes}. The Ref.~\cite{Bao:2015gapped} gives an explicit formula of $S_{\text{topo}}^{\text{KP}}$ in terms of the zeroth Betti number. 
\begin{figure}
    \centering
    \includegraphics[width=1.0\columnwidth]{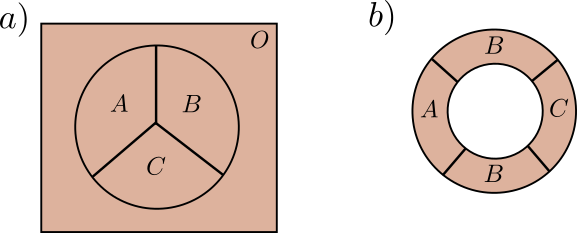}
    \caption{(a) Geometry used by Kitaev and Preskill~\cite{preskill-kitaev-tee} for computing the TEE. The information quantity used here comes from the entropy inequality called the monogamy of mutual information (MMI)~\cite{MMI2013}. (b) Geometry used by Levin and Wen~\cite{levin-wen-tee} for obtaining the TEE. In this case the information quantity used is known as conditional mutual information and it is related to entropy inequality known as strong sub-additvity (SSA)~\cite{ssa1973}. }
    \label{fig:kitaev_wen}
\end{figure}
\end{itemize}

\subsection{Summary of results}
A holography-inspired approach~\cite{Bao:2015gapped} showed that the cyclic family of holographic entropy inequalities (HEI)~\cite{Bao:2015HEC} gives the TEE, $\gamma$. This was demonstrated by explicitly applying the area law, Eq.~\eqref{eq:Stopo_def1}. In this current work, we extend the previously done analysis~\cite{Bao:2015gapped,Bao:2015HEC} in the following sense,
\begin{itemize}
    \item We don't explicitly use the area law, Eq.~\eqref{eq:Stopo_def1} and instead perform a TQFT calculation on a 2d disk-like geometry. We demonstrate that the TEE depends only on the total quantum dimension $\mathcal{D}$, in agreement with the results of Ref.~\cite{preskill-kitaev-tee,levin-wen-tee,Bao:2015gapped}.
    
    \item We show that \emph{all} HEI~\cite{Bao:2024all-hei} are satisfied by gapped Hamiltonians with unique ground state, Eq.~\eqref{eq:Stopo_def1} and, especially the \emph{facet} inequalities, can be used to evaluate the TEE, $\gamma$.
    
    \item We further show that a certain class of entanglement entropy inequalities that need not hold true for holographic quantum states, could hold true for the 2d-disk geometry~\cite{preskill-kitaev-tee}. Hence, these entropy relations can also be used to evaluate TEE.
    
    \item We contrast the definitions of TEE in Eq.~\eqref{eq:Stopo_def2} and Eq.~\eqref{eq:Stopo_def3} and classify all possible definitions of $S_{\text{topo}}$ into \emph{two} classes, suggesting necessary conditions on the \emph{subtraction scheme}~\cite{Kim:2023-tee-bound,Levin:2024proof2} of entanglement entropies.
\end{itemize}

Many of our results are straightforward consequences of the following proposition combined together with established results from the holographic entropy cone literature,
\begin{proposition}
    \label{con-2bal-satisfy}
    Any $2$-balanced information quantity $\mathcal{Q}$ is topological with indefinite sign.
\end{proposition}

We give a heuristic proof of this proposition in \ref{app:heuristic-proof} and anyonic derivation of the same in \ref{app:anyonic-topo-sb}. We give various evidence in favour of the validity of our results independent of this proposition. The organization of this paper is as follows: In section, Sec.~\ref{sec:tqft-tech}, we begin by reviewing the necessary TQFT tools for our discussion. Our main results are in section, Sec.~\ref{sec:hei-tee}, where we consider the HEIs in the context of TEE and use these results in subsection~\ref{subsec:tee-probe-tp} to discuss entanglement-based probes of topological phases. Finally, we discuss our results and outline some future directions in section, Sec.~\ref{sec:discussion}. Throughout this paper, we will neglect the spurious TEE which often arise in the symmetry protected topological phases~\cite{PhysRevB.94.075151,PhysRevB.98.075131,PhysRevLett.122.140506,PhysRevB.100.115112,PhysRevResearch.2.032005,PhysRevLett.131.166601}. However, we will bring them into the discussion towards the end in the section, Sec.~\ref{sec:discussion}.

\section{Necessary TQFT techniques}
\label{sec:tqft-tech}
In this section, we review the calculation for TEE ($S_{\text{topo}}$) for a 2d disk-like geometry with $3$ subregions, Fig.~\ref{fig:kitaev_wen}a~\cite{preskill-kitaev-tee}. The procedure involves stitching together a time-reversal (TR) copy of the system with itself and making punctures at each intersection point~\cite{preskill-kitaev-tee}. Each puncture can host anyon of a particular type. As a result of making punctures, the VN entropy of the subregion $A$ ($S_{A}$), (see Fig.~\ref{fig:kitaev_wen}a) is proportional to the VN entropy of the sphere with $3$ punctures ($\mathcal{S}_{3}$), i.e. $2S_{A}=\mathcal{S}_{3}$. The factor of $2$ comes from the fact that we have doubled the system by attaching a TR copy of the system with itself. Note that, in this procedure we don't create any anyons and therefore the net anyonic charge is zero~\cite{preskill-kitaev-tee}. Hence, the VN entropy of the sphere with $3$ punctures ($\mathcal{S}_{3}$) is equal to the VN entropy of system with $3$ anyonic excitation such that the the total charge vanishes.

Now, we proceed with the calculations for the VN entropy for a sphere with punctures. For concreteness, let us assume that we have a sphere with $4$ punctures and later we give a general formula for the sphere with $n$ punctures. Each puncture hosts anyons and let us denote the anyons by $a$, $b$, $c$ and $d$ with quantum dimension $d_{i}$ where $i=a,b,c,d$. Since, we have the constraint that the total anyonic charge in the system is zero, this implies that the VN entropy $\mathcal{S}_{4}$ is given as,
    \begin{align}\label{eq:vnentropysn}
        \mathcal{S}_{4} = -\sum_{abcd}\sum_{\mu=1}^{N_{abcd}^{\mathds{1}}} \frac{P_{abcd}^{\mathds{1}}}{N_{abcd}^{\mathds{1}}}\log\bigg(\frac{P_{abcd}^{\mathds{1}}}{N^{\mathds{1}}_{abcd}}\bigg) \,,
    \end{align}
where $P_{abcd}^{\mathds{1}}$ refers to the probability of fusing $a$, $b$, $c$ and $d$ to identity ($\mathds{1}$) and $N_{abcd}^{\mathds{1}}$ denotes the number of ways all four fuses to identity ($\mathds{1}$). Here, identity ($\mathds{1}$) refers to vacuum or an anyon with trivial anyonic charge. By using the fact that the probability of fusing anyons of type $a$ and $b$ to give $c$ is equal to $P_{ab}^{c} = \frac{N_{ab}^{c}d_{c}}{d_{a}d_{b}}$~\cite{preskilltqft}, one obtains the probability of fusing $a$, $b$, $c$ and $d$ to identity as follows,
    \begin{align}\label{eq:pacd}
        P_{abcd}^{\mathds{1}} = \frac{N^{\mathds{1}}_{abcd}d_{a}d_{b}d_{c}d_{d}}{\mathcal{D}^{6}}\,,
    \end{align}
where $\mathcal{D}=\sqrt{\sum_{i}d_{i}^{2}}$ is the total quantum dimension. Substituting Eq.~\eqref{eq:pacd} into Eq.~\eqref{eq:vnentropysn} and using the expression for the fusion probability, $P_{ab}^{c}$ one obtains the final expression for the VN entropy as,
    \begin{align}
        \mathcal{S}_{4} = 6\log D-4\sum_{a}P_{a}\log d_{a} \,,
    \end{align}
where $P_{a}=\frac{d_{a}^{2}}{\mathcal{D}^{2}}$ is interpreted as the probability of finding an anyon of type $a$ in a gas of anyons in thermal equilibrium where effectively no more fusions between any pair of anyons are taking place~\cite{preskilltqft}. One can generalize the obtained formula for VN entropy of a sphere with $4-$punctures to sphere with $n-$punctures,
    \begin{align}\label{eq:sn_entropy}
        \mathcal{S}_{n} = 2(n-1)\log \mathcal{D} - n\mathcal{K} \,,
    \end{align}
where we have defined $\mathcal{K} = \sum_{a}P_{a}\log d_{a}$ which we will use in the subsequent calculations.

Now, we can proceed with the calculation for the $S_{\text{topo}}$ in the geometry proposed in Ref.~\cite{preskill-kitaev-tee} with three regions $A$, $B$ and $C$. One can probe the TEE ($S_{\text{topo}}$) by computing the multi-information (we will give a standard definition later), $S_{\text{topo}}=-I(A:B:C)\equiv-\mathcal{I}_{3}=-2S_{3}+\frac{3}{2}S_{4}=\log \mathcal{D}$~\cite{preskill-kitaev-tee}. Hence, TEE only depends on the total quantum dimension. We would like to generalize the calculation for a disk with an arbitrary number of subregions. We begin by considering a disk with $5-$subregions and we will calculate $\mathcal{I}_{5}\equiv -I(A:B:C:D:E)$. Heuristically, one can verify $\mathcal{I}_{5}$ also successfully probes the TEE by plugging Eq.~\eqref{eq:Stopo_def1} into the formula for $\mathcal{I}_{5}$ and observing that all the boundary terms cancels exactly. On a physical ground we expect that we get the same $S_{\text{topo}}$ as we obtained in the case with $3-$subregions by computing $\mathcal{I}_{3}$. This is because the underlying topology is unchanged and increasing the number of subregions should not affect the TEE.

However, we would like to do a TQFT calculation to show that the we obtain the same TEE in the case of $5-$subregions. Let us begin by labeling the $5-$subregions as $A$, $B$, $C$, $D$ and $E$. We will now point out a subtle issue while probing the TEE using $\mathcal{I}_{5}$. The expression for $\mathcal{I}_{5}$ involves VN entropy of two types of regions, which we call contiguous (e.g. AB) and non-contiguous regions (e.g. AC). contiguous regions share a boundary while non-contiguous regions have no boundary in common. In the case where we have $3-$subregions, there is no possibility of non-contiguous regions and hence both $S_{AB}$ and $S_{AC}$ were identified with $\mathcal{S}_{4}$. With $5-$subregions, it turns out that one cannot simply identify $S_{AC}$ with $S_{5}$ and $S_{AB}$ with $S_{4}$ and carry out the calculations. The reason is, making a puncture at point where $3-$subregions meet is different from the point where $5-$subregions meet. We can circumvent the problem by smoothly deforming the geometry such that we only have intersection of at most $3-$subregions, as shown in Fig.~\ref{fig:I5}b.

\begin{figure}
    \centering
    \includegraphics[width=1.0\columnwidth]{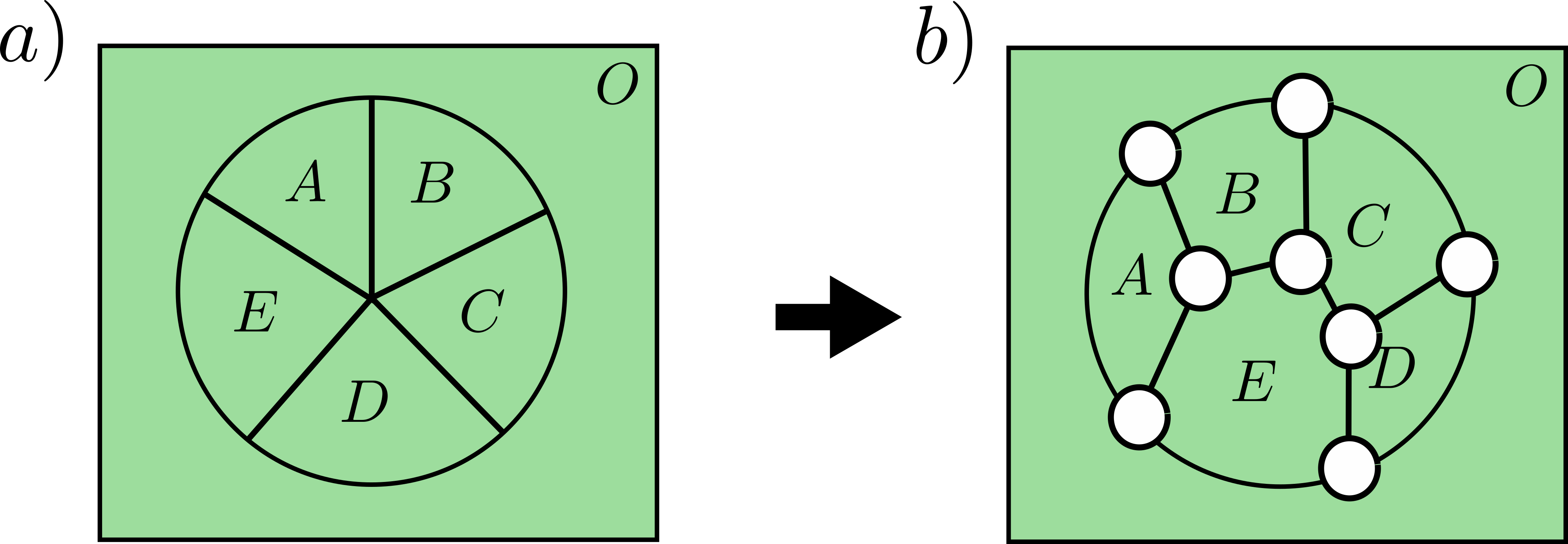}
    \caption{Geometry considered for computing the topological entanglement entropy (TEE) with $5$ subregions. By stitching a time-reversal (TR) copy of the system at spatial infinity and making punctures at the intersection points of subregions, one can map the Von Neumann (VN) entropy of a subregion to the VN entropy of sphere with $n$ number of punctures~\cite{preskill-kitaev-tee}. (a) Disk partitioned into $5$ subregions with all the subregions meeting at a given point. In order to compute the TEE, one first needs to smoothly deform it to a new geometry which have intersection of at most $3$ subregions. Note that this smooth deformation doesn't give rise to any topological phase transition and the system remains in the same topological phase. (b) The new deformed geometry with intersection of at most $3$ subregions. By introducing punctures at the intersections of the subregions after gluing with TR copy, one finds the TEE, $S_{\text{topo}}=\log \mathcal{D}$.}
    \label{fig:I5}
\end{figure}

A smooth deformation doesn't introduce any topological phase transition and system remains in the same topological phase (protected by a gap). This smooth deformation is equivalent to choosing a new set of subregions and doesn't affect the underlying Hamiltonian. By considering small deformations to the geometry, Fig.~\ref{fig:I5}b similar to Ref.~\cite{preskill-kitaev-tee} and using the Eq.~\eqref{eq:Stopo_def1} one finds that $\mathcal{I}_{5}$ probes the $S_{\text{topo}}$ successfully. This enables us to compute the TEE using the five-party multi-information, $\mathcal{I}_{5}$. A careful TQFT calculation reveals that, $S_{\text{topo}} = -\mathcal{I}_{5} = -S_{3} - S_{4} - \frac{3}{2}S_{8} + S_{7} +  2S_{6} = \log \mathcal{D}$~\cite{Note1}. This is the same as we obtained using disk with $3-$subregions and computing $\mathcal{I}_{3}$, which we expect as the topology is the same.

\section{Holographic entropy inequalities in TQFT}
\label{sec:hei-tee}
We will briefly review the idea of area law in holography to motivate the holographic entropy inequalities. The $AdS_{d+1}/CFT_{d}$ correspondence~\cite{Maldacena:1997ads-cft} is a duality between a quantum theory of gravity in $d+1$ dimensional Anti-de Sitter $(AdS)-$spacetime and a conformal field theory (CFT) on its $d$ dimensional boundary. The entanglement entropy $S(A)$ of a subregion $A$ on the boundary is related to the area of the minimal surface $\Gamma_A$ in the bulk homologous to $A$, and is given by the Ryu-Takayanagi formula~\cite{Ryu:2006bv,Ryu:2006ef},
\begin{equation}\label{eq:RT-formula}
    S(A)=\frac{|\Gamma_A|}{4G_N}\,,
\end{equation}
to the leading order, where 
$G_N$ is Newton's constant.

We will refer to the quantum states on the boundary CFT that have a semi-classical dual in the bulk and obey the RT formula, Eq.~\eqref{eq:RT-formula} as \emph{holographic states}. These states are a proper subset of all quantum states on the boundary CFT. The goal of the holographic entropy cone program ~\cite{Bao:2015HEC,Czech:2023ewn,He:2019repackaged,HernandezCuenca:2019wgh,He:2020superbalance,Hernandez-Cuenca:2023iqh,Bao:2024toric,Bao:2024contraction_map,Bao:2024all-hei} is to non-trivially constrain these multipartite holographic states. This program has led to the discovery of many holographic entropy inequalities and here we will study those inequalities in the context of TEE.

\subsection{The holographic entropy cone}
The holographic entropy cone (HEC) constrains the entropic phase space of holographic states whose leading order entanglement entropy is given by the RT formula, Eq.~\eqref{eq:RT-formula}. The HEC is a convex, rational, polyhedral cone, which can be described by a finite number of facet inequalities (see definition \ref{def:facet}). These inequalities are tight and they cannot be improved further. First, let us give an example using the five-party cyclic inequality, and then we will generalize our claims for all facet inequalities of the HEC.

Consider a five-party information quantity, $\mathcal{Q}_{5}$, which is given by Eq.~\eqref{eq:cyclic-5}. The quantity $\mathcal{Q}_{5}$ along with other information quantities formed using the cyclic family of holographic inequalities were shown to give the TEE~\cite{Bao:2015gapped} by assuming the area law, Eq.~\eqref{eq:Stopo_def1}. Here, we verify the claim by calculating $\mathcal{Q}_{5}$ explicitly using TQFT tools, Eqs.~\eqref{eq:cyclic-5tqft1}--\eqref{eq:cyclic-5tqft2}. 
\begin{widetext}
\begin{align}
\mathcal{Q}_5 &=  S(ABC) + S(BCD) + S(CDE) + S(DEA) + S(EAB) - S(AB) - S(BC) - S(CD)- S(DE)  - S(EA) \notag \\ 
& \ \ \ - S(ABCDE) \,, \label{eq:cyclic-5} \\ 
&=\frac{1}{2}\big[ 3\mathcal{S}_{7} + 2\mathcal{S}_{6} - 3\mathcal{S}_{6} - 3\mathcal{S}_{5} \big] \label{eq:cyclic-5tqft1} \,,\\
&= \frac{1}{2}\big[36\log \mathcal{D} - 21\mathcal{K} + 20\log\mathcal{D} - 12\mathcal{K}  - 30\log \mathcal{D} + 18\mathcal{K} -24\log \mathcal{D} + 15\mathcal{K}\big] =\log \mathcal{D}\,. \label{eq:cyclic-5tqft2}
\end{align}
\end{widetext}
In fact, it is straightforward to do a similar proof for the family of cyclic inequalities, see appendix~\ref{sec:cycli_inequalities}. Having illustrated an example, we will now generalize our claims about holographic entropy inequalities (HEI). We will begin by giving some definitions.

\begin{definition}[Holographic entropy inequality]
A holographic entropy inequality (HEI) is an information quantity $\mathcal{Q}$ on $n$-subregions, such that all $n$-party holographic states satisfy
\begin{equation}
    \mathcal{Q}=\sum_{i=1}^{2^n-1} a_i S_i \geq 0,
\end{equation}
where $S_i$ are the entanglement entropy of subregions accompanied by integer coefficients $a_i$.
\end{definition}

Given $n$ parties, one can construct an entropy basis of all possible ($2^n-1$) subregion entanglement entropies and order it lexicographically. For example, if $n=3$ and the subregions are $\{A,B,C\}$, then the entropy basis is $\{S_A,S_B,S_C,S_{AB},S_{AC},S_{BC},S_{ABC}\}$. An \emph{entropy vector} is a vector in this basis. We will call it a \emph{holographic entropy vector} when the system is a holographic system, i.e, the $n$ subregions are chosen on a boundary $\partial X$ of a bulk manifold $X$. The \emph{holographic entropy cone} is the space of all allowed holographic entropy vectors.

\begin{definition}[Facet HEI]\label{def:facet}
 A holographic entropy inequality $\mathcal{Q}$ involving $n$-parties is a facet of the HEC iff there exists a codimension-1 set of linearly independent holographic entropy vectors which saturate $\mathcal{Q}$. We will refer to such HEIs as `facet HEIs'.
\end{definition}

We will now define the $n$-partite information quantity $\mathcal{I}_n$ introduced already, which we will call \textit{multi-information}. Given $n$ regions $A_i$, where $i=1,\dots , n$, we define $\mathcal{I}_n$ to be,
\begin{align}\label{eq:In}
    \mathcal{I}_{n} =& \sum_{i}^n S(A_i) - \sum_{i<j}^n S(A_iA_j) + \cdots \notag\\
    &+ (-1)^{n+1} S(A_1\cdots A_n)\,,
\end{align}
which for up to $n=3$ are given by (using shorthand $S(A_i)\equiv S_{A_i}$),
\begin{equation}\label{eq:multiinfodef}
    \begin{split}
         \mathcal{I}_1=& S_{A_1} \ ; \ \mathcal{I}_2= S_{A_1}+S_{A_2}-S_{A_1 A_2}\,,\\
         \mathcal{I}_3=& S_{A_1}+S_{A_2}+S_{A_3}-S_{A_1 A_2}-S_{A_1 A_3}-S_{A_2 A_3}\\
        &+S_{A_1 A_2A_3}\,.
    \end{split}
\end{equation}
Note that $\mathcal{I}_n$ is generally sign indefinite for $n>3$ in holographic systems and the results from \cite{preskill-kitaev-tee} indicates that $\mathcal{I}_3$ is sign-definite. Here, we find that $\mathcal{I}_n$ is sign definite and equals $-\log{\mathcal{D}}$, for all $n\geq 3$. In appendix~\ref{sec:review-tee}, we explicitly calculate $\mathcal{I}_{3}$, $\mathcal{I}_{4}$ and $\mathcal{I}_{5}$. We introduce another quantity $\mathcal{I}_{m,n}$, which is defined same as $\mathcal{I}_{m}$ but for $m\leq n$, where $n$ is the total number of subregions, such that $\mathcal{I}_{n,n}\equiv\mathcal{I}_{n}$. For example, consider a system with $5$ subregions ($A_{1}$, $A_{2}$, $A_{3}$, $A_{4}$ and $A_{5}$) and arbitrarily choose $3$ subregions, say, $A_{1}$, $A_{2}$ and $A_{3}$ then $\mathcal{I}_{3,5}=\mathcal{I}_{3}$, Eq.~\eqref{eq:multiinfodef}. The set of multi-information $\{\mathcal{I}_{i,n} \}_{i=3}^{n}$ forms a basis for the facet HEI~\cite{He:2019repackaged} involving $n-$parties (excluding subaddivity).  With the above definitions, we can now formulate our following propositions and colloraries, which follow from the proposition \ref{con-2bal-satisfy}, which we will discuss later. While we present our work in a manner where our propositions are conditional on proposition \ref{con-2bal-satisfy}, we give independent supporting evidence for these propositions in the appendices. For brevity, we simply state our conditional propositions as propositions.

\begin{proposition}
\label{thm:In-topological}
All $\mathcal{I}_{m,n}$ are topological for $3\leq m \leq n$ on a $2d$ disk geometry divided into $n$ subregions.
\end{proposition}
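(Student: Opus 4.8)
The plan is to reduce the general statement to the already-established machinery for computing Von~Neumann entropies of subregions in terms of punctured spheres, and then to verify that the ``boundary'' (area-law) contributions cancel in any $\mathcal{I}_{m,n}$. First I would fix the $2d$ disk partitioned into $n$ subregions and, exactly as in the treatment of $\mathcal{I}_5$ around Fig.~\ref{fig:I5}, smoothly deform the partition so that at most $3$ subregions meet at any intersection point; this deformation does not change the topological phase and hence does not affect $S_{\text{topo}}$. Having done so, any union $A_{i_1}\cdots A_{i_k}$ of a chosen subset of the subregions is a region on the disk whose boundary, after doubling with the time-reversed copy and puncturing, maps to a sphere with some number of punctures $p(A_{i_1}\cdots A_{i_k})$, so that $2S(A_{i_1}\cdots A_{i_k}) = \mathcal{S}_{p}$ with $\mathcal{S}_p = 2(p-1)\log\mathcal{D} - p\,\mathcal{K}$ from Eq.~\eqref{eq:sn_entropy}. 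The entire computation of $\mathcal{I}_{m,n}$ then becomes a weighted sum of terms of the form $2(p-1)\log\mathcal{D} - p\mathcal{K}$, i.e. an affine function of the puncture counts $p$ of the various unions.

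The key combinatorial step is to track the puncture count $p(U)$ for a union $U$ of subregions. For a chosen collection of $m$ subregions out of $n$ on the deformed disk, each time-slice intersection vertex contributes punctures according to how the local subregions distribute between $U$ and its complement; the standard bookkeeping (as in Ref.~\cite{preskill-kitaev-tee}) gives $p(U)$ as a linear function of the number of boundary-segment endpoints of $U$, which in turn is controlled by the adjacency structure of the chosen subregions. I would then show that when these $p(U)$ are substituted into $\mathcal{I}_{m,n} = \sum_{k=1}^{m}(-1)^{k+1}\sum_{|U|=k} S(U)$, the alternating-sum structure forces the coefficient of $\log\mathcal{D}$ coming from the ``$-1$'' in $(p-1)$ and the entire $\mathcal{K}$-dependent piece to cancel, because the number of $k$-subsets with $\binom{m}{k}$ alternating signs annihilates any contribution that is affine-linear in a sum over individual adjacencies or vertices — this is the same cancellation that makes the naive area-law check work, now done at the level of the TQFT puncture counts rather than Eq.~\eqref{eq:Stopo_def1}. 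What survives is the genuinely global piece, which evaluates to $-\log\mathcal{D}$, matching the $m=n$ cases $\mathcal{I}_3,\mathcal{I}_4,\mathcal{I}_5$ computed explicitly in the SM~\cite{Note1}. This establishes that $\mathcal{I}_{m,n}$ is topological (indeed equals $-\log\mathcal{D}$) for all $3\leq m\leq n$, which in particular contains Proposition~\ref{thm:In-topological-weak} as the special case $m=n$.

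The main obstacle I anticipate is the bookkeeping of puncture counts $p(U)$ for an \emph{arbitrary} chosen subset of $m$ subregions embedded in the deformed $n$-region disk: unlike the uniform $n$-region ``pie'' where every region is adjacent to two neighbors, the union $U$ of a generic subset can be disconnected and can have a complicated boundary, so one must argue that $p(U)$ depends only on data (number of connected components of $U$, of its complement, and of shared boundary arcs) that enters $\mathcal{I}_{m,n}$ through exactly cancelling alternating sums. One clean way to handle this is to observe that $\mathcal{I}_{m,n}$ is, by construction, the same linear functional of entropies regardless of which $m$ subregions are picked, so it suffices to evaluate it on \emph{one} convenient choice — e.g. $m$ consecutive wedges of the pie — and invoke the topological invariance under smooth deformation (relabeling/merging of the remaining $n-m$ regions into the chosen ones, or into the ``outside'') to conclude the value is the same for every choice; this sidesteps the worst of the combinatorics and reduces Proposition~\ref{thm:In-topological} to essentially Proposition~\ref{thm:In-topological-weak} applied with $n$ replaced by $m$ after a coarse-graining of regions.
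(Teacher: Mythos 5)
First, a framing point: the paper does not actually prove this statement. It is stated as a conjecture (the \texttt{proposition} environment is typeset as ``Conjecture''), and the supplementary material supports it only with explicit computations of $\mathcal{I}_3$, $\mathcal{I}_4$, $\mathcal{I}_5$, of $\mathcal{I}_{3,5}$ for one particular non-contiguous choice of three regions out of five, of $\mathcal{I}_{3,2n+1}$ for one choice on the $(2n+1)$-wedge disk, and numerical checks up to $n=20$ subregions. So you are attempting to prove strictly more than the paper establishes, and your scaffolding --- deform the partition so that at most three regions meet at a point, convert each $S(U)$ into a punctured-sphere entropy via Eq.~\eqref{eq:sn_entropy}, and argue that the puncture-dependent pieces cancel in the alternating sum --- is exactly the paper's case-by-case method, promoted to a general argument. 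That is the right plan.

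The gaps are in the execution. (1) The central cancellation is asserted, not proven. What is needed is a vertex-by-vertex statement: for each triple-intersection point $v$, the signed count of subsets $U$ of the chosen $m$ regions whose boundary passes through $v$ must vanish; depending on how many of the three regions meeting at $v$ are among the chosen ones, this reduces to identities such as $\sum_{k}(-1)^{k+1}\binom{m-1}{k-1}=0$ and $\sum_{k}(-1)^{k+1}\bigl[\binom{m}{k}-\binom{m-2}{k}\bigr]=0$ for $m\geq 3$, which hold but require a short case analysis you never perform. Separately, the constant ``$-1$'' in $2(p-1)\log\mathcal{D}$ does \emph{not} cancel: it contributes $-\log\mathcal{D}$ once per term, and since the coefficients of $\mathcal{I}_m$ sum to $1$, this is what produces $\mathcal{I}_{m,n}=-\log\mathcal{D}$. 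Making this step rigorous forces you to decide whether a \emph{disconnected} union $U$ (unavoidable for non-contiguous choices, e.g.\ $A\cup C$ on the five-wedge disk) carries one $-\log\mathcal{D}$ in total or one per connected component; the paper's bookkeeping ($S_{AC}\to\tfrac12\mathcal{S}_7$, a single sphere with $3+4$ punctures and a single global fusion constraint) uses one per region, and your sketch never confronts this, even though it is precisely where the claim could fail to be topological. (2) Your proposed workaround for the ``arbitrary subset'' bookkeeping --- evaluate $\mathcal{I}_{m,n}$ on $m$ consecutive wedges and transport the answer by smooth deformation or by merging the unchosen regions --- does not work. The adjacency and connectivity pattern of the chosen subregions is a topological invariant of the partition: no smooth deformation of the five-wedge disk makes $A$ and $C$ adjacent without collapsing $B$, and merging unchosen regions into chosen ones changes the regions whose entropies enter $\mathcal{I}_{m,n}$, hence changes the quantity being evaluated. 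Different choices of the $m$ subregions are genuinely inequivalent configurations --- which is exactly why the paper computes $\mathcal{I}_{3,5}$ and $\mathcal{I}_{3,2n+1}$ separately rather than deducing them from Conjecture~\ref{thm:In-topological-weak} --- so your argument must handle arbitrary adjacency data head-on through the cancellation in point (1) rather than sidestep it.
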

A very simple corollary of the above proposition is the following,
\begin{corollary}\label{cor:In-topological-weak}
$\mathcal{I}_n$ is topological for $n\geq 3$ on a $2d$ disk geometry divided into $n$ subregions.
\end{corollary}
Based on the above proposition (which we prove later by using proposition~\ref{con-2bal-satisfy}), we will make a few more propositions. See the supplementary material for more discussions around the proposition~\ref{thm:In-topological} and corollaries  and~\ref{cor:In-topological-weak} and their independent verification without using proposition~\ref{con-2bal-satisfy}.
\begin{proposition}\label{thm:hei-topological}
All facet HEIs of the HEC $(n\geq 3)$ are topological, when evaluated on a disk using the quantum entanglement entropy in the unique ground state of a topologically ordered two-dimensional medium with a mass gap. 
\end{proposition}
\begin{proof}
From proposition \ref{thm:In-topological}, we have that all $\mathcal{I}_{m,n}$ are topological. All facet HEIs can be expressed in the $\mathcal{I}$-basis~\cite{He:2019repackaged}. Since the basis elements $\mathcal{I}_{m,n}$ is topological. Therefore, all facet HEIs are topological.
\end{proof}
We have proved that the facet HEIs are topological, now we will prove that it is also positive, on a 2d disk geometry. We will use the following lemma \ref{lem:coeff-hei}.
\begin{lemma}
\label{lem:coeff-hei}
    The sum of coefficients $\{a_i\}$ of all facet HEIs, except subadditivity(SA) is negative. 
\end{lemma}
This is proved in Ref.~\cite{Bao:2024all-hei}. For more details, see appendix~\ref{sec:HEI-proof}.
\begin{proposition}\label{thm:hei-satisfy}
All facet HEIs of the HEC $(n\geq 3)$ are satisfied by the quantum entanglement entropy in the unique ground state of a topologically ordered two-dimensional medium with a mass gap. 
\end{proposition}
\begin{proof}
From proposition \ref{thm:hei-topological}, we have all facet HEIs are topological. We will now prove that they are also non-negative. 
Following section~\ref{sec:tqft-tech} and appendix~\ref{sec:review-tee}, we have $\gamma=\log{\mathcal{D}}$ on the $2d$-disk geometry. 

Since we have a connected geometry, each term contributes a $-2\gamma$.
This is because each term has the form given by Eq.~\eqref{eq:sn_entropy}, and due to topological nature of HEIs shown in proposition~\ref{thm:hei-topological}, the $n$-dependence of the HEI cancels out, leaving behind $-2\gamma$, which we divide by $2$ to account for the TR copy.

By using Lemma~\ref{lem:coeff-hei} we have the sum of coefficients is negative for all facet HEIs of the HEC ($n\geq 3)$~\cite{Bao:2024all-hei}. 

Therefore, the HEIs are strictly positive in this geometry.
\end{proof}

We will take a step further and discuss the case when these inequalities are saturated. Now consider the 2d planar geometry where each region is disconnected. In this case, each $k$-party entanglement term contributes $k$ number of $-\gamma$ terms as there are $k$ boundaries. We know from \cite{He:2020superbalance} that all facet HEIs of the HEC are \emph{balanced}, i.e, each party appears equal number of times with positive and negative coefficients.  As a result, the topological terms cancel and the HEIs are strictly zero.

We have an alternative proof for the same based on proposition \ref{con-2bal-satisfy} stated after lemma \ref{lem:balanced-hei}. We give a third proof of this argument in appendix~\ref{sec:HEI-proof} exploiting the structural forms of HEIs. We have verified numerically that this holds for all known superbalanced facet HEIs.

\subsection{What about non-facet inequalities?}
We will now look at the true HEIs that are not facets of the HEC. For example, consider the following information quantity,
\begin{widetext}
    \begin{align} \label{eq:q61}
   \mathcal{Q}_{6,1} & = S(AEF)+S(BEF)+S(ADE)+S(ADF) +S(BDE)+S(BDF)+S(ABCD)+S(ABCE)\\
             & \ \ \ +S(ABCF)+S(C)-S(ABCEF)-S(ABCDF)-S(ABCDE)-S(AD)-S(AE)-S(AF)\\
             & \ \ \ -S(BD)-S(BE)-S(BF)-S(CDEF)\,, \\
             & 
             =\frac{1}{2}\big[ 4\mathcal{S}_{9} - 4\mathcal{S}_{7}+\mathcal{S}_{10}-\mathcal{S}_{6}\big] 
             = 12\log\mathcal{D} - 6\mathcal{K}\ \ \ \textbf{(Not topological)} \,.\label{eq:q61nottopological}
    \end{align}
\end{widetext}

Calculating the information quantity $\mathcal{Q}_{6,1}$ on the disk-like geometry, Fig.~\ref{fig:six_sub_regions} using Eq.~\eqref{eq:sn_entropy}, we find that $\mathcal{Q}_{6,1}$ is not topological, Eq.~\eqref{eq:q61nottopological}. 
However, it is possible to find some fixed configuration(s) of geometry where $Q_{6,1}$ is topological and gives the TEE. We classify all such information quantities that are topological only under specific choice of geometries as \emph{fixed-geometry TEE probes} (in contrast to fixed-topology TEE probes). The definition of $S_{\text{topo}}$ using \emph{strong sub-additivity} in \cite{levin-wen-tee} is another example of this class. More explicitly, we can check that on a $2d-$disk, we have
\begin{align}
    S^{\text{LW}}_{\text{topo}} &= S(AB)+S(BC)-S(B)-S(ABC) \\
    &= \frac{1}{2}\big[ 2\mathcal{S}_{4} - \mathcal{S}_{3} - \mathcal{S}_{3} \big] = 2\log \mathcal{D} - \mathcal{K},
\end{align}
whereas the same quantity evaluated on a torus-like geometry with the subregion $B$ disconnected gives the TEE (see Fig.~\ref{fig:kitaev_wen}b).

Now we will look at another example where the given inequality does not hold for all holographic states, but the information quantity defined using it, $\mathcal{Q}_{6,2}$ turns out to probe the TEE, Eq.~\eqref{eq:q62}. The factor of $2$ in front of $\log{\mathcal{D}}$ comes from the sum of coefficients being $-2$ [we will illuminate this point in~\ref{thm-general-satisfy}]
\begin{widetext}
\begin{align} \label{eq:q62}
    \mathcal{Q}_{6,2}=&S(ADE)+S(ADF)+S(AEF)+S(BDE)+S(BDF)+S(BEF)+S(CDE)+S(CDF)+S(CEF)\\
    &+S(ABC)-S(AD)-S(AE)-S(AF)-S(BD)-S(BE)-S(BF) -S(CD) -S(CE)-S(CF)\\
    & -2S(DEF)-S(ABCDEF) \,,\\
    =& \frac{1}{2}\big[4\mathcal{S}_{9} - 3\mathcal{S}_{6} - 4\mathcal{S}_{7} + \mathcal{S}_{10}\big] \,,\\
    =& \frac{1}{2}\big[4 \log\mathcal{D} \big] = 2\log\mathcal{D} \ \ \ \textbf{(Topological)}\,.
\end{align}
\end{widetext}

\begin{figure}
    \centering
    \includegraphics[width=0.4\columnwidth]{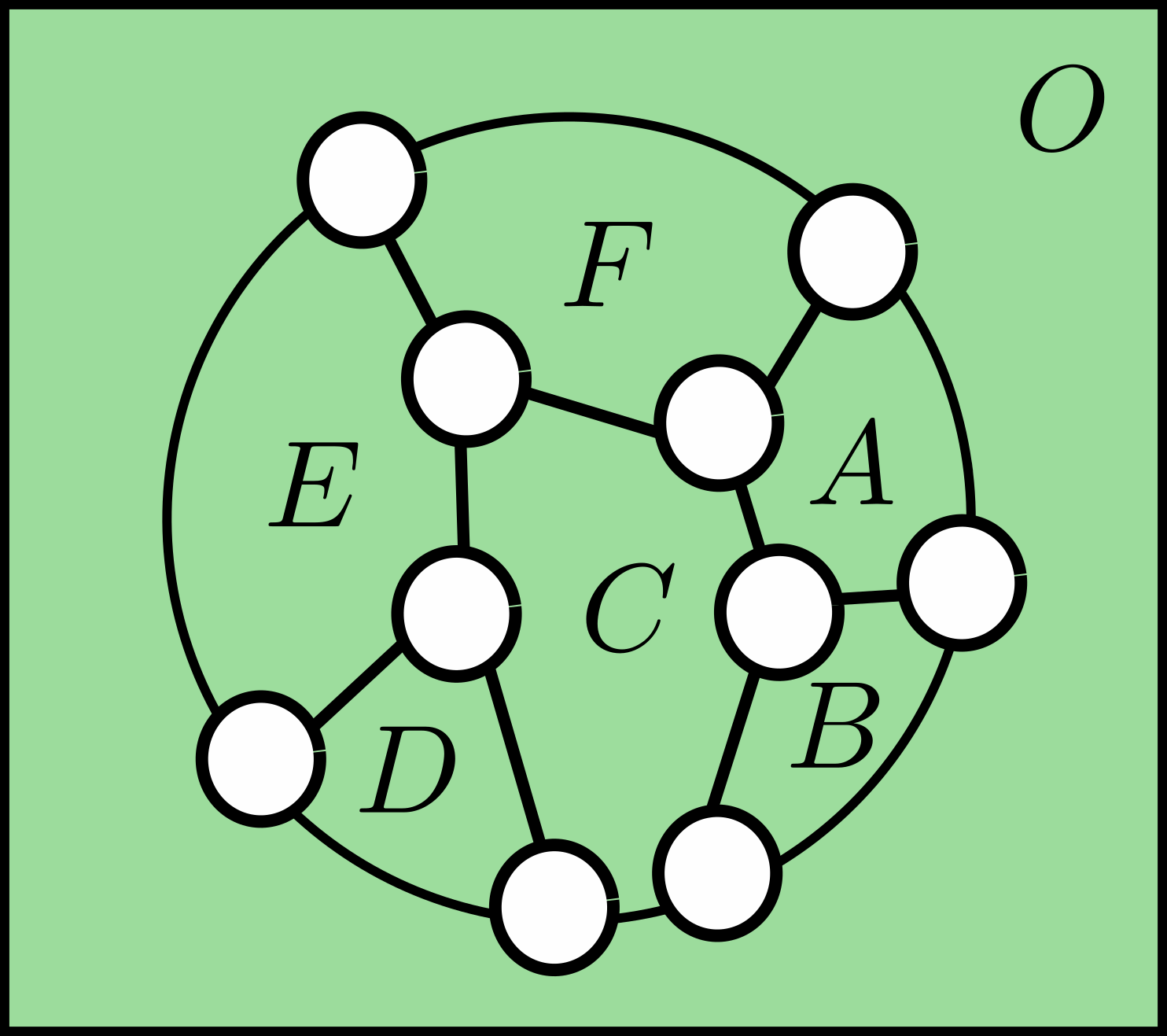}
    \caption{A disk with $6-$subregions obtained after smooth deformations which is used to obtain the value of the information quantity $\mathcal{Q}_{6,1}$, Eq.~\eqref{eq:q61} and $\mathcal{Q}_{6,2}$, Eq.~\eqref{eq:q62}. It turns out that the information quantity $\mathcal{Q}_{6,1}$ can not be used to probe the TEE however the latter quantity, $\mathcal{Q}_{6,2}$ can probe the TEE.}
    \label{fig:six_sub_regions}
\end{figure}
This is surprising as it naively seems to suggest that the space of allowed entropy vector space  of the unique ground state of a gapped TQFT is strictly contained within the HEC. However, the space of allowed entropy vector space in the case of degenerate ground states is expected to be larger than the HEC (and thus those states can violate the HEIs). Alternatively, it might be the case, that inclusion of the spurious entanglement entropy may violate these inequalities that seem to naively hold. We postpone these interesting questions for our future work and instead generalize the above observations below. We will need a few more definitions before we do that.

\begin{definition}[$k$-balanced HEIs]\label{def:superbalance}
A HEI $\mathcal{Q}$ is said to be $k$-balanced, iff every $k$-clustering of single subregions are balanced, i.e, they appear equal number of times with positive and negative coefficients.
\end{definition}
For example, consider the monogamy of mutual information (MMI), Eq.~\eqref{eq:Stopo_def3} involving three subregions. Each subregion $A, B, C$ are balanced, implying $1$-balanced, every $2$-clusters $AB,AC, BC$ are also balanced, implying $2$-balanced. However, the $3$-cluster $ABC$ is not balanced and so on. Note that $1$-balance and $2$-balance can be alternatively referred as balance and superbalance, respectively. We will recall the following proposition~\ref{con-2bal-satisfy} below which is conjectured on the basis of empirical observations and proved in appendix \ref{app:anyonic-topo-sb}.

\emph{Any $2$-balanced information quantity $\mathcal{Q}$ is topological with indefinite sign.}

Now we will state our general proposition applicable to all facet and non-facet inequalities.

\begin{proposition}\label{thm-general-satisfy}
    Any information quantity $\mathcal{Q}_i$ whose sum of coefficients is negative (positive) and satisfies $2$-balance, is both topological and sign-definite, on a 2d disk-like geometry.
\end{proposition}
\begin{proof}
By proposition \ref{con-2bal-satisfy}, $\mathcal{Q}_i$ is topological. If the sum of coefficients is negative (positive), then, each term contributes a $-\gamma$ as a result of connected geometry and therefore $\mathcal{Q}_i$ is sign-definite.
\end{proof}
This explains why facet HEIs $(n\geq 3)$ are both valid and topological. We will make this more concrete below.

\begin{lemma}\label{lem:balanced-hei}
    All facet HEIs, except subadditivity (SA) are $2$-balanced.
\end{lemma}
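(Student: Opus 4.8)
The plan is to reduce $2$-balance to a linear condition on the coefficient vector and then verify it on the very basis already used in the proof of Proposition~\ref{thm:hei-topological}. Write an HEI on parties $\{1,\dots,n\}$ as $\mathcal{Q}=\sum_{\emptyset\neq I\subseteq\{1,\dots,n\}}a_I S_I$ and, for a pair $\{i,j\}$, put $D_{ij}(\mathcal{Q})=\sum_{I\supseteq\{i,j\}}a_I$. By Definition~\ref{def:superbalance}, $\mathcal{Q}$ is $2$-balanced precisely when $D_{ij}(\mathcal{Q})=0$ for every pair, and $\mathcal{Q}\mapsto D_{ij}(\mathcal{Q})$ is a linear functional of the coefficients. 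Now recall, exactly as invoked in the proof of Proposition~\ref{thm:hei-topological}, the repackaging theorem~\cite{He:2019repackaged}: every facet HEI other than SA lies in the $\mathcal{I}$-basis, i.e. it is a linear combination of instances $\mathcal{I}_{m,K}$ of the multipartite information $\mathcal{I}_m$ evaluated on sub-collections $K$ of the parties with $|K|=m\geq 3$ (the $\mathcal{I}_{m,n}$ of the text). Hence it suffices to show that each such basis element obeys $D_{ij}=0$.

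I would then just compute $D_{ij}(\mathcal{I}_{m,K})$ for $|K|=m\geq 3$. If $\{i,j\}\not\subseteq K$ the sum over $I\supseteq\{i,j\}$ with nonzero coefficient is empty, so $D_{ij}=0$. If $\{i,j\}\subseteq K$, parametrise each $J$ with $\{i,j\}\subseteq J\subseteq K$ as $J=\{i,j\}\cup L$, $L\subseteq K\setminus\{i,j\}$; the coefficient of $S_J$ in $\mathcal{I}_m$ is $(-1)^{|J|+1}=(-1)^{|L|+3}=-(-1)^{|L|}$, so that
\[
D_{ij}(\mathcal{I}_{m,K})=-\sum_{L\subseteq K\setminus\{i,j\}}(-1)^{|L|}=-(1-1)^{|K|-2}=0
\]
since $|K|-2\geq 1$. (For $m=2$, i.e. $\mathcal{I}_2=\mathrm{SA}$, the same computation gives $-(1-1)^{0}=-1\neq 0$, which is exactly why subadditivity is the lone exception.) By linearity of $D_{ij}$, any linear combination of the $\mathcal{I}_{m,K}$ with $m\geq 3$ also has $D_{ij}=0$ for all pairs; in particular every facet HEI $\mathcal{Q}\neq\mathrm{SA}$ is $2$-balanced.

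The only non-elementary input is the repackaging theorem~\cite{He:2019repackaged} (equivalently, the span statement already used for Proposition~\ref{thm:hei-topological}), and that is where I would locate the real content: the rest is the two-line binomial cancellation above. If one instead wanted a route that does not black-box this, I would prove the lemma in the equivalent form $\mathcal{Q}(\vec{c}_{ij})=0$, where $\vec{c}_{ij}$ is the holographic entropy vector of an EPR pair shared between parties $i$ and $j$: for any $1$-balanced $\mathcal{Q}$ a one-line cancellation gives $\mathcal{Q}(\vec{c}_{ij})=-2D_{ij}(\mathcal{Q})$, and $1$-balance of facets is supplied by Ref.~\cite{He:2020superbalance}. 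Since $\vec{c}_{ij}$ is a genuine holographic state, validity gives $D_{ij}\leq 0$ for free; the reverse bound $D_{ij}\geq 0$ is the obstacle, and it genuinely requires facet-ness rather than mere validity — strong subadditivity $S(A_iR)+S(A_jR)-S(R)-S(A_iA_jR)\geq 0$, with $R$ the union of the remaining parties, is a valid $1$-balanced HEI with $D_{ij}=-1$ that fails $2$-balance, and it is precisely its redundancy in the cone that must be used. Extracting $D_{ij}\geq 0$ would then have to come from the structural/contraction-map characterisation of facet HEIs of Ref.~\cite{Bao:2024all-hei} (the same machinery behind Lemma~\ref{lem:coeff-hei}), and that structural step is what I expect to be the main obstacle on this second route.
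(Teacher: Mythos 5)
The paper offers no proof of this lemma at all---it is outsourced wholesale to Ref.~\cite{He:2020superbalance}---so your proposal is not ``the same approach''; it supplies an actual argument where the paper has only a pointer. Your first route is correct as far as it goes: identifying $2$-balance with the vanishing of the linear functionals $D_{ij}(\mathcal{Q})=\sum_{I\supseteq\{i,j\}}a_I$ is the right reading of Definition~\ref{def:superbalance} (it checks out against the MMI and $\mathcal{Q}_{6,2}$ examples, with coefficients counted with sign and multiplicity), and the cancellation $D_{ij}(\mathcal{I}_{m,K})=-(1-1)^{|K|-2}=0$ for $m\geq 3$, failing precisely at $m=2$, is correct. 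The caveat---which you flag yourself---is that the black-boxed input (every non-SA facet lies in the span of $\{\mathcal{I}_{m,K}\}_{m\geq 3}$) is not weaker than the conclusion: a dimension count, $\sum_{m\geq 3}\binom{n}{m}=2^n-1-n-\binom{n}{2}$, shows that span \emph{is} exactly the $2$-balanced subspace, so the $\mathcal{I}$-basis statement and the lemma are essentially equivalent and your computation establishes only the easy inclusion. What the route buys is internal economy: the paper already invokes the $\mathcal{I}$-basis fact in proving Proposition~\ref{thm:hei-topological}, so your argument turns Lemma~\ref{lem:balanced-hei} into a corollary of an assumption the paper has already made rather than a second independent citation. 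Your sketched second route (Bell pairs give $\mathcal{Q}(\vec{c}_{ij})=-2D_{ij}$ hence $D_{ij}\leq 0$ from validity; facet-ness is genuinely needed for $D_{ij}\geq 0$, with SSA as the witness that validity alone is insufficient) is in fact much closer to the actual proof in Ref.~\cite{He:2020superbalance}, and you have correctly located the hard direction. One point to watch: in the cited literature superbalance is imposed on pairs drawn from the $n$ parties \emph{plus the purifier}, which is strictly stronger than the paper's stated definition; your computation covers only pairs of named parties, which suffices for the lemma as written here but not for the purifier-symmetric version.
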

The proof is given in Ref.~\cite{He:2020superbalance}.

Since, all facet HEIs (except SA) satisfy lemma \ref{lem:balanced-hei}, \ref{lem:coeff-hei} and proposition \ref{thm-general-satisfy}, therefore, all facet HEIs are both valid and topological.

One may question the validity of the proposition \ref{con-2bal-satisfy}. In the first version of our work, we have empirically, we have checked all known (over 2000) holographic entropy inequalities $(n\geq 3)$ in the 2d disk geometry and found them to be both valid and topological. We have also tested them over many superbalanced information quantities that are holographically false to gather empirical evidence in favour of proposition \ref{thm-general-satisfy} and have not found any counterexample. In the revised version, we added a proof in appendix \ref{app:anyonic-topo-sb}.

Nevertheless, we will try to give physics motivation behind \ref{con-2bal-satisfy} from the perspective of entanglement entropy in quantum field theories~\cite{Casini:2004bw,Casini:2022rlv,Hubeny:2018ijt}. In a local QFT, the entanglement entropy of any finite region diverges. Balanced information quantities cancel these divergences, however, it treats the purifying region $O$ as a separate entity. Superbalanced quantities on the other hand treat the purifier $O$ at the same footing as other regions. This give rise to a scheme-independent cancellation of divergences. It has been argued that superbalanced quantities characterize certain topological properties of the configurations~\cite{Hubeny:2018ijt}. We believe that this picture is captured clearly in the TQFT setting where the total quantum dimension is related to the TEE. In the following section, we will suggest how our results are useful in the context of entanglement-based probes of topological phases.

\section{Discussions}
\label{sec:discussion}
We have shown that the holographic entropy inequalities are satisfied by the quantum entanglement entropy of the non-degenerate ground state of a topologically ordered two-dimensional medium with a mass gap. We also showed that these superbalanced inequalities can be used as probes for the topological entanglement entropy. In $(2+1)$d dimensions, the Bekenstein-Hawking entropy of BTZ has been proposed as a topological entanglement entropy \cite{McGough:2013gka}. Another recent work \cite{Bao:2024ixc} has also suggested that $AdS_3$ gravity can be described as a topological quantum field theory, and some instances of the $AdS_3/CFT_2$ correspondence may be viewed as a topological symmetry-preserving quantum RG flow. In the light of these developments, $3d$ gravity may serve as good inspiration for research on TQFTs and vice versa. We discuss some aspects of our results and future directions below.

\subsection{Entanglement-based probes of topological phases}
\label{subsec:tee-probe-tp}
Topological phase is associated with the existence of anyons. The TEE captures this anyonic content exactly because TEE is a function of the total quantum dimension of a given gapped Hamiltonian. Therefore calculating TEE of a system in a unique ground state is one of the possible pathways to detect topological order. Coming up with information quantities that can capture this long-range correlations, equivalently, the topological order is a non-trivial task. Our analysis sheds light on how one can write down some information quantities that can probe the TEE on $2d-$disk-like geometry. We showed that an information quantity which is superbalanced (see definition~\ref{def:superbalance}), can be used as a probe for the TEE. Hence, using the proposed prescription one can write down information quantities that can successfully probe the TEE and hence detect any topological order. Thus, a general formula for calculating the TEE is given by a superbalanced information quantity $\mathcal{Q}$ can be dubbed as,
\begin{equation}\label{eq:topo-probe-general-disk}
    \mathcal{Q}= -c \gamma = -c \log{\mathcal{D}},
\end{equation}
where $c$ is the sum of coefficients in $\mathcal{Q}$. If one is interested in obtaining $S_{\text{topo}}$, then 
\begin{equation}\label{eq:topo-probe-general-disk2}
    S_{\text{topo}}=\frac{-1}{c}\mathcal{Q}=\gamma,
\end{equation}
where the formula Eq.~\eqref{eq:topo-probe-general-disk} (and Eq.~\eqref{eq:topo-probe-general-disk2}) is applicable for the $2d$ disk geometry. By changing the topology, especially, connected components of the system, the formula needs to be appropriately modified taking the topology into account.

As shown in the appendix~\ref{sec:review-tee} and appendix~\ref{sec:cycli_inequalities}, the multi-information $\mathcal{I}_n$ and cyclic entropy inequalities $\mathcal{Q}_{2n+1}$ are good candidates for probing TEE . While the cyclic inequalities are motivated from holography, the multi-information doesn't have a supporting holographic counterpart. Both of these quantities are generalization of the MMI used in \cite{preskill-kitaev-tee}, which is given by $\mathcal{Q}_3$ and $\mathcal{I}_3$, respectively. The advantage of generalizing these quantities is that if the physical system in consideration is naturally divided into more than three regions, these quantities give a direct approach to calculate the TEE. Another advantage of using other facet HEI probes is that, it might be the case that for a non-trivial topology, one has the pre-factor vanishing for some inequalities having the same sum of coefficients, say $c$, in those cases, one has to choose a different probe, having a sum of coefficients, $c'$ (and $c'\neq c$).

To reiterate our classification of TEE probes, we have defined all superbalanced information quantities as fixed-topology TEE probes while all balanced information quantities are termed as fixed-geometry TEE probes, for which the 2d disk geometry is not suitable.

\subsection{Lessons from holography}
We will take inspiration from holography to comment on multi-information $\mathcal{I}_n$ being the potential signal of multipartite entanglement in gapped Hamiltonians. Recent works in holography \cite{Balasubramanian:2024ysu,Gadde:2024taa} propose various information quantities as measures (and signals) of multipartite entanglement entropy, each with their pros and cons. In particular, holographically, $\mathcal{I}_n$ is zero on separable quantum states and sign-indefinite for entangled states. Unfortunately, it is zero, for some states with true multiparite entanglement. Therefore, it is a good but not the best signal for probing multipartite entanglement in holography. In gapped Hamiltonians with unique ground states (i.e. topologies with 2d-disk), $\mathcal{I}_n=-\gamma$ (for $n\geq 3$), suggestive of $n$-party global correlations in the ground state. Of course, if one picks a geometry in which all subregions are disconnected, then $\mathcal{I}_n$ vanishes identically. So, we already know that $\mathcal{I}_n$ is a good signal. It would be interesting to import the alternative definitions of signals of multipartite entanglement from holography and probe them in topological systems. We will report results in this direction in future work.

\subsection{Including spurious entanglement entropies}
It was shown previously that, by including spurious contributions to entanglement entropy, the equality $\gamma=\log{\mathcal{D}}$ ceases to hold~\cite{PhysRevB.94.075151,PhysRevB.98.075131,PhysRevLett.122.140506,PhysRevB.100.115112,PhysRevResearch.2.032005,PhysRevLett.131.166601}. One limitation of our proof method is that we have first proved that the information quantities formed using HEIs are topological, i.e., they satisfy the equality $\gamma=\log{\mathcal{D}}$, and then use it to show that the HEIs are valid. One may argue that if instead one has an inequality $\gamma \geq \log{\mathcal{D}}$, are the facet HEIs still valid? Since the sum of coefficients for a facet HEI is negative, if there are physical processes that produce a positive contribution to the entanglement entropy (such as spurious contributions) for every or some subregions, depending on the relative magnitude of those contributions, it is a priori not clear if these inequalities continue to hold and one might expect to see violations similar to the case of quantum extremal surfaces in holography \cite{Akers:2021lms}. Perhaps, these exotic systems will shed more light on the nature of the HEC, if they cease to hold true for superbalanced information quantities with negative sum of coefficients that are not valid for holographic states, but continue to hold for facet HEIs. This is a very interesting question, and we will take a deep dive on this question in a future work. The answer will also shed light on the space of quantum states of gapped Hamiltonians with unique ground states.

\subsection{On changing the geometry}
Another limitation is that we have used the geometry of a 2d-disk for our proof. Does the proof hold when one considers other topologies? It is clear that the geometry where each region is disconnected from each other, all superbalanced quantities will identically evaluate to zero. However, we would like to believe that the facet HEIs would always give a non-negative value of the entanglement entropy for the subtraction scheme. It would be interesting if there is a geometry that contradicts this. For the case of connected geometries, where some HEIs may yield zero, others may not. In summary, we would like to conjecture that the full holographic entropy cone, inclusive of all facet HEIs is a probe of global correlations in gapped Hamiltonians. We also think that one may be able to come up with geometries that violate some or all of the superbalanced information quantities with negative sum of coefficients that do not hold holographically.

\subsection{Beyond gapped Hamiltonians}
Lastly, we would like to learn beyond holography and gapped Hamiltonians, and ask which inequalities hold true, in general, for all quantum states. All quantum entanglement entropy inequalities (QEEI) must be obeyed by the GHZ states. This puts a lower bound on the sum of the coefficients of the QEEIs~\cite{Bao:2024all-hei} to be zero. On the other hand, the 2d-disk geometry considered in our analysis puts the upper bound on the sum of coefficients to zero for the validity of such inequalities. Therefore, one only needs to look at those inequalities whose sum of coefficients are identically zero. We believe that these inequalities form the boundary that separates inequalities that are valid for topological systems from those that do not. Although one can easily find geometries that violate superbalanced inequalities that do not hold in holography whose sum of coefficients is zero~\footnote{We thank Michael Levin for working out this example with us.}, we leave it as a future exercise, to ascertain if the same is true for inequalities that hold holographically. This may shed light on whether the space of ground states of gapped Hamiltonians is closed or open.

\section{Acknowledgements}
We thank Ning Bao for suggesting this project direction. J.N. would like to thank Ning Bao, Keiichiro Furuya, Eugene Tang and particularly Michael Levin for useful discussions. S.S.S. would like thank Colleen Delaney, Bowen Yan, Ayush Raj and Soham Ray for useful discussions. We thank Ning Bao and Eugene Tang for comments on the draft. J.N. is supported by the NSF under Cooperative Agreement PHY2019786 and the Graduate Assistantship by the Department of Physics, Northeastern University. S.S.S. is partially supported by Graduate Assistantship by the Department of Physics and Astronomy, Purdue University.


\begin{appendix}

\begin{widetext}

\renewcommand{\thesection}{\Alph{section}}
\renewcommand{\theequation}{\thesection\arabic{equation}}
\renewcommand{\thefigure}{\thesection\arabic{figure}}
\renewcommand{\thetable}{\thesection\arabic{table}}

\makeatletter
\@addtoreset{equation}{section}
\@addtoreset{figure}{section}
\@addtoreset{table}{section}
\makeatother

\section{Multi-information as a probe for topological entanglement entropy}\label{sec:review-tee}
In this appendix, we are going to review the calculation for the topological entanglement entropy along the lines of Ref.~\cite{preskill-kitaev-tee}. Here, we will show that the $S_{\text{topo}}$ can be probed by $\mathcal{I}_{5}$ in a disk-like geometry with $5$ subregions, Fig.~\ref{fig:I5}b. The expression for $\mathcal{I}_{5}$ is given as follows,
\begin{widetext}
    \begin{align}
    S_{\text{topo}} \ (\text{$5$ subregions})\equiv -\mathcal{I}_{5} =& - S_{A} - S_{B} - S_{C} - S_{D} - S_{E} \\
    &+ S_{AB} + S_{AC}+ S_{AD}+ S_{AE}+ S_{BC}+ S_{BD}+ S_{BE}+ S_{CD}+ S_{CE}+ S_{DE}\\
    &- S_{ABC}- S_{ABD}- S_{ABE}- S_{ACD}- S_{ACE}- S_{ADE}- S_{BCD}\\
    &- S_{BCE}- S_{BDE}- S_{CDE}\\
    &+ S_{ABCD}+ S_{ABCE}+ S_{ACDE}+ S_{ABDE}+ S_{BCDE}\\
    &-S_{ABCDE}\,.
\end{align}
\end{widetext}
It is straightforward to show that in this case, all the boundary terms cancels exactly and under small and smooth deformations~\cite{preskill-kitaev-tee} $S_{\text{topo}}\equiv \mathcal{I}_{5}$ probes the topological entanglement entropy, Eq.~\eqref{eq:Stopo_def1}. The procedure to calculate $\mathcal{I}_{5}$ involves gluing together a time-reversal copy of the system at spatial infinity and making punctures at the intersection of the subregions. The resultant geometry is shown in the Fig.~\ref{fig:topological_pants}a.
\begin{figure*}
    \centering
    \includegraphics[width=\linewidth]{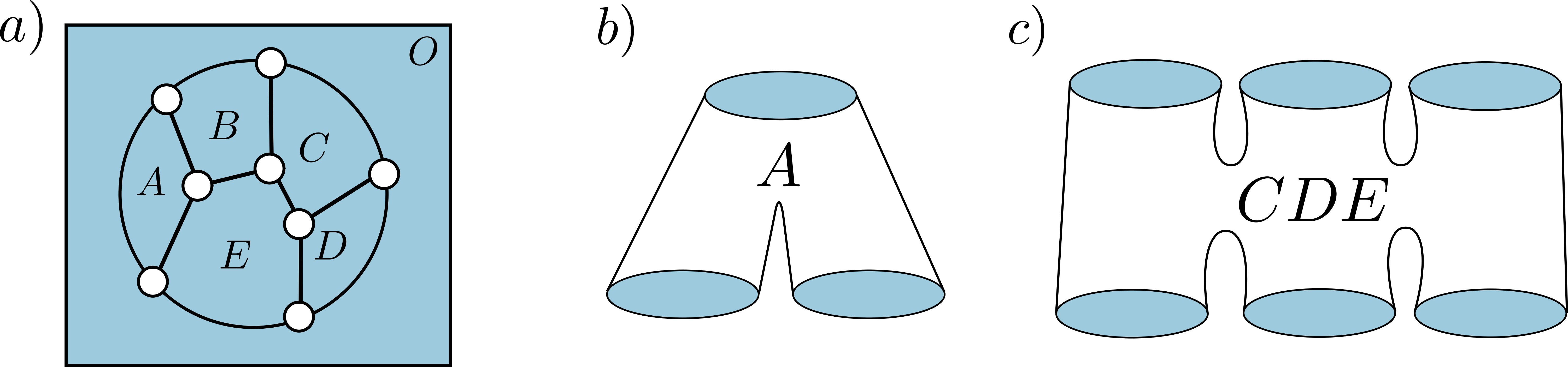}
    \caption{Geometry used to calculate the topological entanglement entropy ($S_{\text{topo}}$). (a) Stitching a time reversal copy at spatial infinity and making punctures at the intersections of subregions. (b) subregion $A$ is now identified with a sphere with $3-$punctures whose Von-neumann entropy is calculated using TQFT and is equal to $S_{A}=\mathcal{S}_{3} = 4\log\mathcal{D}-3\mathcal{K}$. (c)subregion $CDE$ is now identified with sphere with $6$ punctures and the Von-Neumann entropy is given as $S_{CDE}=\mathcal{S}_{6} = 10\log\mathcal{D} - 6\mathcal{K}$. Note that in this case only punctures at boundaries are contributing~\cite{preskill-kitaev-tee}}
    \label{fig:topological_pants}
\end{figure*}
If we look at any subregion, then we find that it is a sphere with $n$ number of punctures, e.g., region $A$ is identified with sphere with $3-$punctures and the region $CDE$ is identified with a sphere with $6-$punctures. The Von-Neumann entropy of any subregion can be computed using TQFT and is given by Eq.~\eqref{eq:sn_entropy}. This implies, we have,
\begin{widetext}
    \begin{align}
    S_{\text{topo}} \ (\text{$5$ subregions}) \equiv -\mathcal{I}_{5} =& \frac{1}{2}\big[-\mathcal{S}_{3}-\mathcal{S}_{4}-\mathcal{S}_{4}-\mathcal{S}_{3}-\mathcal{S}_{5} \\
    &+ \mathcal{S}_{5} + \mathcal{S}_{7} + \mathcal{S}_{6} + \mathcal{S}_{6} + \mathcal{S}_{6} + \mathcal{S}_{7} + \mathcal{S}_{7} + \mathcal{S}_{5} + \mathcal{S}_{7} + \mathcal{S}_{6} \\
    &- \mathcal{S}_{7} - \mathcal{S}_{8} - \mathcal{S}_{6} - \mathcal{S}_{8} - \mathcal{S}_{8} - \mathcal{S}_{7} - \mathcal{S}_{7} -\mathcal{S}_{7} - \mathcal{S}_{8} - \mathcal{S}_{6}\\
    &+ \mathcal{S}_{8} + \mathcal{S}_{6} + \mathcal{S}_{7} + \mathcal{S}_{7} + \mathcal{S}_{6} - \mathcal{S}_{5}\big] = \log \mathcal{D}\,,
\end{align}
\end{widetext}
by using Eq.~\eqref{eq:sn_entropy}. This shows that $\mathcal{I}_{5}$ also probes the topological entanglement entropy of a disk with $5$ subregions. Similar set of calculations can be performed on a disk with $3$ and $4$ subregions,
\begin{widetext}
    \begin{align}
    S_{\text{topo}} \ (\text{$3$ subregions})  \equiv -\mathcal{I}_{3} =& - S_{A} - S_{B} - S_{C} + S_{AB} + S_{BC} + S_{CA} - S_{ABC} =\frac{1}{2} \big( -4\mathcal{S}_{3}+3\mathcal{S}_{4} \big)\,,\\
    =& \frac{1}{2}\big( -16 \log \mathcal{D} + 12 \mathcal{K} + 18 \log \mathcal{D} - 12 \mathcal{K} \big) = \log \mathcal{D}\,, \\
    S_{\text{topo}} \ (\text{$4$ subregions})  \equiv -\mathcal{I}_{4} =& - S_{A} - S_{B} - S_{C} - S_{D} + S_{AB} + S_{AC} + S_{AD} + S_{BC} + S_{BD} + S_{CD} \\
    &- S_{ABC} - S_{ABD} - S_{BCD} - S_{ACD} + S_{ABCD}\,, \\
    =& 
    \frac{1}{2}\big(-2\mathcal{S}_{3}-\mathcal{S}_{4}+2\mathcal{S}_{5}\big)=\log \mathcal{D}\,.
\end{align}
\end{widetext}
Note that all the quantities $\mathcal{I}_{3}$, $\mathcal{I}_{4}$, and $\mathcal{I}_{5}$ give the same value for TEE, i.e. $S_{\text{topo}}=\log\mathcal{D}$. It is expected because the underlying topology does not change if we increase the number of subregions, therefore, we obtained the same TEE. We verified it numerically for up to $n=20$ subregions. Next, we shall consider the case where we calculate $\mathcal{I}_{3}$ for three arbitrarily chosen subregions out of $5$ available subregions on a disk, Fig.~\ref{fig:topological_pants}a. Consider a system consisting of $A$, $C$ and $E$, then we can probe the topological entanglement entropy as follows,
\begin{widetext}
    \begin{align}
    S_{\text{topo}} \ (\text{$5$ subregions}) = -\mathcal{I}_{3,5} \ (\text{$5$ subregions}) &=-\frac{1}{2} \big[S_{A} + S_{C} + S_{E} - S_{AC} - S_{AE} - S_{EC} + S_{AEC}\big] \,,\\
    &= -\frac{1}{2}\big[\mathcal{S}_{3} + \mathcal{S}_{4} + \mathcal{S}_{5} - \mathcal{S}_{7} - \mathcal{S}_{6} - \mathcal{S}_{7} 
    + \mathcal{S}_{8}\big]\,, \\
    &= \log \mathcal{D}\,.
\end{align}
\end{widetext}
Now, let us consider a more general case where we have $(2n+1)$-subregions (odd number of subregions) on a disk-like geometry, Fig.~\ref{fig:odd_number_sub_regions}a,c. Now, we shall probe the topological entanglement entropy by choosing $3$ subregions that are non-contiguous, say $A_{2n+1}$, $A_{2}$ and $A_{4}$. In this case, we can write the following,
\begin{widetext}
    \begin{align}
    S_{\text{topo}} \ (\text{$2n+1$ subregions}) = -\mathcal{I}_{3,2n+1} &=S_{A_{2n+1}} + S_{A_{2}} + S_{A_{4}} - S_{A_{2n+1}A_{2}} - S_{A_{2n+1}A_{4}} - S_{A_{2}A_{4}} + S_{A_{2n+1}A_{2}A_{4}} \,,\\
    &= -\frac{1}{2}\big[\mathcal{S}_{2n+1} + \mathcal{S}_{4} + \mathcal{S}_{4} - \mathcal{S}_{2n+3} - \mathcal{S}_{2n+3} - \mathcal{S}_{8} 
    + \mathcal{S}_{2n+5}\big]\,, \\
    &= \log \mathcal{D}\,.
\end{align}
\end{widetext}
where we have used the general formula for the VN entropy of a sphere with $n-$number of punctures, Eq.~\eqref{eq:sn_entropy}. In all of the examples we demonstrated that the there are many ways to probe the topological entanglement entropy and all of them yields $S_{\text{topo}}=\log \mathcal{D}$. Since the underlying geometry is the same, therefore the TEE computed is also same.
\section{Proof for the general case of cyclic inequalities}\label{sec:cycli_inequalities}
In this section we are going to look at the general case of cyclic inequalities for $(2n+1)$-subregions (odd number of subregions) on a disk-like geometry, Fig.~\ref{fig:odd_number_sub_regions}. Expression for the cyclic inequality can be written as follows,
\begin{equation}
    \label{eq:cyclic-ineqs}
    \sum_{i=1}^{2n+1} S_{a_i^+} \geq \sum_{i=1}^{2n+1} S_{a_i^-}+ S_{A}\,,
\end{equation}
where we have defined $A=\{a_1,\dots, a_{2n+1}\}$ to be the set of $2n+1$ subregions, $S_{A}=S_{a_1\dots a_{2n+1}}$ and
\begin{equation}\label{eq:a_i^k}
    a^{(k)}_i = a_i \cdots a_{i+k-1},\;a^{\pm}_i := a_i^{(\frac{2n+1\pm 1}{2})} = a_i^{n\pm 1}\,.
\end{equation}
The corresponding information quantity turns out to be able to probe the topological entanglement entropy,
\begin{equation}
    S_{\text{topo}}=\sum_{i=1}^{2n+1} S_{a_i^+}-\sum_{i=1}^{2n+1} S_{a_i^-}-S_{A}.
\end{equation}
This was proven in Ref.~\cite{Bao:2015gapped} by assuming the area law Eq.~\eqref{eq:Stopo_def1}. We quote their result on the derived formula for $\gamma$ below,
\begin{widetext}
    \begin{align}\label{eq:ning-gamma}
    S_{\text{topo}}= -\gamma\left\{\sum_{i=1}^{2n+1}\left(b_0 [\partial(A_i\dots A_{i+n}]-b_0 [\partial(A_{i+1}\dots A_{i+n}]\right)+b_0 [\partial(A_1\dots A_{2n+1})] \right\},
\end{align}
\end{widetext}
where $b_0[\partial A]$ denotes the zeroth Betti number (the number of connected components) of the boundary of a region A, and all indices are taken modulo $(2n + 1)$. Now we will give a TQFT proof of the same for a fixed topology, i.e., the 2d disk. The first step involves a continuous deformation of the geometry where we have intersection of at most $3$ subregions. Second step involves gluing together a time-reversal copy of the system and making punctures at each intersection point, as shown in the Fig.~\ref{fig:odd_number_sub_regions}. 
\begin{figure*}
    \centering
    \includegraphics[width=1.0\linewidth]{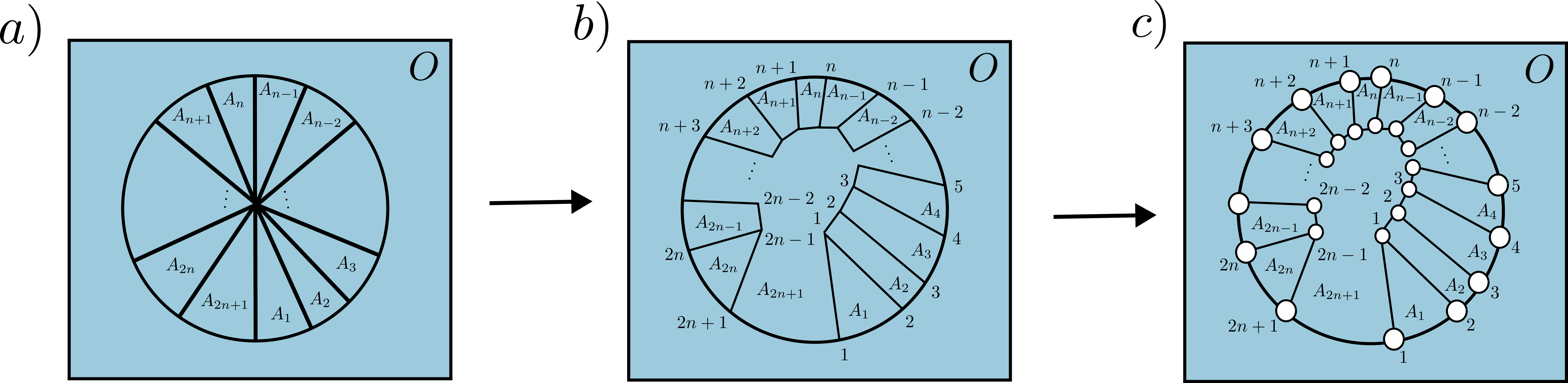}
    \caption{(a) Extracting the topological entanglement entropy of disk geometry with with $2n+1$ subregions. (b) First step involves deforming the geometry where we have at most intersection of $3$ subregions. (c) Gluing a time-reversal copy of the system and making punctures at the intersection of $3$ subregions. As a result of this Von-Neumann entropy of a subregion is equal to Von-Neumann entropy of sphere with punctures where each puncture host a anyon of a particular type.}
    \label{fig:odd_number_sub_regions}
\end{figure*}
The final step involves identifying the Von Neumann entropy of the subregions in terms of a sphere with particular number of punctures. This is done as follows,
\begin{widetext}
    \begin{align}
    \sum_{i=1}^{2n+1}S_{a_{i}}^{+} & =S_{a_{1}}^{+}+\sum_{i=2}^{n-1}S_{a_{i}}^{+}+S_{a_{n}}^{+}+S_{a_{n+1}}^{+}+S_{a_{n+2}}^{+}+\sum_{i=n+3}^{2n-1}S_{a_{i}}^{+}+S_{2n}^{+}+S_{a_{2n+1}}^{+}\,,\\
    & =S_{a_{1}}^{+}+(n-1-2+1)S_{a_{i}}^{+}+S_{a_{n}}^{+}+S_{a_{n+1}}^{+}+S_{a_{n+2}}^{+}+(2n-1-n-3+1)S_{a_{n+3}}^{+}++S_{2n}^{+}+S_{a_{2n+1}}^{+}\,,\\
 & =S_{a_{1}}^{+}+(n-2)S_{a_{2}}^{+}+S_{a_{n}}^{+}+S_{a_{n+1}}^{+}+S_{a_{n+2}}^{+}+(n-3)S_{a_{n+3}}^{+}+S_{2n}^{+}+S_{a_{2n+1}}^{+}\,,\\
 & =\frac{1}{2}\big[\mathcal{S}_{2n+3}+(n-2)\mathcal{S}_{2n+4}+\mathcal{S}_{2n+3}+\mathcal{S}_{2n+2}+\mathcal{S}_{2n+3}+(n-3)\mathcal{S}_{2n+3}+\mathcal{S}_{2n+3}+\mathcal{S}_{2n+2}\big]\,,\\
  & =\frac{1}{2}\big[2\mathcal{S}_{2n+2}+(n+1)\mathcal{S}_{2n+3}+(n-2)\mathcal{S}_{2n+4}\big]\,,\\
  \sum_{i=1}^{2n+1}S_{a_{i}}^{-} & =S_{a_{1}}^{-}+\sum_{i=2}^{n}S_{a_{i}}^{-}+S_{a_{n+1}}^{-}+S_{a_{n+2}}^{-}+S_{a_{n+3}}^{-}+\sum_{i=n+4}^{2n-1}S_{a_{i}}^{-}+S_{2n}^{-}+S_{a_{2n+1}}^{-}\,,\\
 & =\frac{1}{2}\big[\mathcal{S}_{2n+1}+(n-1)\mathcal{S}_{2n+2}+\mathcal{S}_{2n+1}+\mathcal{S}_{2n+2}+\mathcal{S}_{2n+3}+(n-4)\mathcal{S}_{2n+3}+\mathcal{S}_{2n+3}+\mathcal{S}_{2n+2}\big]\,,\\
 & =\frac{1}{2}\big[2\mathcal{S}_{2n+1}+(n+1)\mathcal{S}_{2n+2}+(n-2)\mathcal{S}_{2n+3}\big]\,.
\end{align}
\end{widetext}
The factor of half comes from the fact that we have doubled the system by gluing it with its time-reversal copy. Similar set of arguments and calculations gives,
the last term $S_{A}=\frac{1}{2}\mathcal{S}_{2n+1}$ and this implies, we have,
\begin{align}
    \sum_{i=1}^{2n+1}S_{a_{i}}^{+} - \sum_{i=1}^{2n+1}S_{a_{i}}^{-} - S_{A}  
    =& \log \mathcal{D} \,.
\end{align}
where we have plugged in Eq.~\eqref{eq:sn_entropy} to obtain the final result.

\section{Proof of holographic entropy inequalities in non-degenerate ground states of TQFT}\label{sec:HEI-proof}

Now we will introduce the tripartite form of the facet HEIs given in \cite{Hernandez-Cuenca:2023iqh}.
\begin{definition}[Tripartite form \cite{Hernandez-Cuenca:2023iqh}]\label{def:tripartite}
    An information quantity $\mathcal{Q}$ is said to be in the tripartite form if it is expressed as
    \begin{equation}\label{eq:tripartite}
        \mathcal{Q}=\sum_i -I_3(X_i:Y_i:Z_i | W_i)
    \end{equation}
    where the arguments $X_i, Y_i, Z_i, Wi \subset [N]$ are disjoint subsystems, the sum runs over any finite number of terms, and we allow for the conditioning to trivialize, $W_i = \emptyset$, in which case $I_3(X_i:Y_i:Z_i | \emptyset)= I_3(X_i:Y_i:Z_i)$ and, they are defined to be
    \begin{equation}
        I_3(X_i:Y_i:Z_i | W_i)=I_3(X_i:Y_i:Z_i W_i)-I_3(X_i:Y_i:W_i)
    \end{equation}
    and,
    \begin{equation}
        I_3(X_i:Y_i:Z_i)= X_i + Y_i + Z_i - X_i Y_i - X_i Z_i - Y_i Z_i + X_i Y_i Z_i
    \end{equation}
    We denote $I^pC^q$ for a $\mathcal{Q}$ that has $p$ number of $-I_3(X_i: Y_i: Z_i)$ and $q$ number of $-I_3(X_i:Y_i:Z_i | W_i)$ terms in the sum (\ref{eq:tripartite}).
\end{definition}

We borrow the following conjecture from \cite{Hernandez-Cuenca:2023iqh}, which holds true for all known superbalanced facet HEIs.

\begin{proposition}\label{conj:sergio}
     All facet inequalities (except SA) are expressible in the  $I^pC^q$ form with $p\geq 1$ and $q\geq 0$.
\end{proposition}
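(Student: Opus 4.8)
The plan is to read Conjecture~\ref{conj:sergio} as a statement about the \emph{dual} description of the holographic entropy cone (HEC): the facet HEIs other than subadditivity are exactly the extreme rays of the HEC's dual (inequality) cone, and the claim is that this cone is generated by the elementary quantities $-I_3(X{:}Y{:}Z)$ and $-I_3(X{:}Y{:}Z\mid W)$ ranging over all choices of disjoint subsystems, with moreover at least one \emph{unconditioned} generator appearing in every facet. I would attack this in three stages: (i) fix the number of unconditioned generators by a coefficient count; (ii) dispose of small party number using the known facet classifications as base cases; (iii) lift to general party number by extracting a tripartite decomposition from an existing contraction-map proof of the inequality.

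Stage (i) is elementary and already instructive. An unconditioned $-I_3(X{:}Y{:}Z)$ has entropy-coefficient sum $-1$, whereas $-I_3(X{:}Y{:}Z\mid W)=-I_3(X{:}Y{:}ZW)+I_3(X{:}Y{:}W)$ is balanced with coefficient sum $0$; hence any $I^pC^q$ representation of an information quantity $\mathcal{Q}$ forces the coefficient sum of $\mathcal{Q}$ to equal $-p$. By Lemma~\ref{lem:coeff-hei} this sum is strictly negative for every facet HEI except SA, so the requirement $p\ge 1$ is \emph{mandatory} whenever a representation exists (and $p=|c|$ is completely fixed), while SA, whose coefficient sum is positive, is correctly excluded. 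In particular the content of Conjecture~\ref{conj:sergio} is purely the \emph{existence} of the decomposition. This also dovetails with the way the conjecture is used to re-derive Theorem~\ref{thm:hei-satisfy}: on the disk every conditioned generator evaluates to $0$ and every unconditioned one to $+\gamma$, so $\mathcal{Q}=p\gamma=-c\gamma$, matching Eq.~\eqref{eq:topo-probe-general-disk}.

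For stages (ii)--(iii) I would induct on the number $N$ of parties. The base cases $N\le 5$ (and in practice $N=6$) involve a finite, explicitly known facet list~\cite{HernandezCuenca:2019wgh,He:2019repackaged,Bao:2024all-hei}, so the $I^pC^q$ forms can be produced by hand or by a linear-programming search over candidate generators with the coefficient sum pinned to $-p$; this is essentially how the conjecture has been verified in all presently known cases~\cite{Hernandez-Cuenca:2023iqh}. For the inductive step, the natural device is a contraction-map proof of a facet HEI $\mathcal{Q}$ on $N$ parties~\cite{Bao:2024contraction_map,Bao:2024all-hei}: such a proof assigns, configuration by configuration of RT surfaces, a bit-string map certifying $\mathrm{RHS}-\mathrm{LHS}\ge 0$ \emph{term by term}, and one would try to regroup the certified surface-area differences into mutual-information-like blocks and, after telescoping, into conditioned tripartite informations together with a single leftover unconditioned MMI term. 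Making this regrouping canonical --- by tracking which single parties switch side under the contraction map and using the superbalance of Lemma~\ref{lem:balanced-hei} to close the bookkeeping --- is the crux.

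The main obstacle is exactly this last point: there is no known structural characterization of which (super)balanced information quantities are facets of the HEC, the facet list being obtained only party-by-party by computer search, so a \emph{uniform} proof that all of them carry a tripartite decomposition would require genuinely new input about the geometry of the HEC --- in particular a canonical procedure turning a contraction map into a telescoping sum, which is not currently available. Since the results of this paper require Conjecture~\ref{conj:sergio} only for the alternative proof of Theorem~\ref{thm:hei-satisfy} --- the main chain of argument instead runs through Lemma~\ref{lem:coeff-hei}, Lemma~\ref{lem:balanced-hei} and Proposition~\ref{thm-general-satisfy} --- it is adopted here from~\cite{Hernandez-Cuenca:2023iqh} as an organizing principle and an independent empirical cross-check rather than established anew.
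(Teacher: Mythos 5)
The paper offers no proof of Conjecture~\ref{conj:sergio}: it is explicitly borrowed from \cite{Hernandez-Cuenca:2023iqh} and asserted only to hold for all \emph{known} superbalanced facet HEIs, i.e., it rests on case-by-case verification of the finitely many facets found so far, exactly the status you assign to it. Your proposal therefore lands where the paper does, and your stage (i) count --- an unconditioned $-I_3(X{:}Y{:}Z)$ contributes $-1$ and a conditioned $-I_3(X{:}Y{:}Z\mid W)$ contributes $0$ to the coefficient sum, so any $I^pC^q$ representation forces $p=-\sum_i a_i\geq 1$ --- is correct and is precisely the mechanism by which the paper derives Proposition~\ref{lem:coeff-hei} \emph{from} the conjecture; just note that this pins the value of $p$ but contributes nothing to the existence of the decomposition, which (as you say) is the entire open content, and your contraction-map/induction sketch remains a programme rather than an argument.
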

The definition \ref{def:tripartite}, taken together with the proposition \ref{conj:sergio}, gives a straightforward proof for lemma \ref{lem:coeff-hei} used in the text.

\begin{theorem}\label{thm:hei-satisfy-alter}
All facet HEIs of the HEC $(n\geq 3)$ are satisfied by the quantum entanglement entropy in the unique ground state of a topologically ordered two-dimensional medium with a mass gap. 
\end{theorem}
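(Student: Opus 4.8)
The plan is to give a structural proof that bypasses Lemma~\ref{lem:coeff-hei} and instead leans on the tripartite decomposition of Definition~\ref{def:tripartite} together with Conjecture~\ref{conj:sergio}. Subadditivity needs no argument, since $S(XY)\le S(X)+S(Y)$ holds in every density matrix; so it suffices to treat a facet HEI $\mathcal{Q}$ other than subadditivity. By Conjecture~\ref{conj:sergio} we may write
\begin{equation}
\mathcal{Q}=\sum_{i=1}^{p}\big(-I_3(X_i:Y_i:Z_i)\big)+\sum_{j=1}^{q}\big(-I_3(X_j:Y_j:Z_j\,|\,W_j)\big),
\end{equation}
with $p\ge 1$, $q\ge 0$ and disjoint arguments, so the whole task reduces to evaluating the two types of building blocks on the $2d$ disk in the unique gapped ground state.

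Next I would evaluate each block. An unconditioned term $-I_3(X:Y:Z)$ is precisely a three-party multi-information $-\mathcal{I}_3$ of the three (possibly composite) subsystems $X$, $Y$, $Z$; by Proposition~\ref{thm:In-topological}, after the smooth deformation of the disk that renders each lumped region connected and leaves at most triple intersection points — the manoeuvre used in Sec.~\ref{sec:tqft-tech} and Sec.~\ref{sec:review-tee} — the puncture-counting formula Eq.~\eqref{eq:sn_entropy} gives $-I_3(X:Y:Z)=\gamma=\log\mathcal{D}$. For a conditioned term I would use $I_3(X:Y:Z\,|\,W)=I_3(X:Y:ZW)-I_3(X:Y:W)$, so that
\begin{equation}
-I_3(X:Y:Z\,|\,W)=-I_3(X:Y:ZW)+I_3(X:Y:W)=\log\mathcal{D}-\log\mathcal{D}=0,
\end{equation}
i.e. the conditional tripartite information is topologically trivial on the disk, because both the conditioned and unconditioned copies are themselves topological and evaluate to the same $\gamma$.

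Assembling the pieces, only the $p$ unconditioned terms survive, so $\mathcal{Q}=p\log\mathcal{D}\ge\log\mathcal{D}\ge 0$, strictly positive whenever the medium carries nontrivial topological order ($\mathcal{D}>1$) and saturated only in the trivial case $\mathcal{D}=1$; in all cases $\mathcal{Q}\ge 0$, which is the claim. As a consistency check, comparing with Proposition~\ref{thm:hei-topological} — which already gives $\mathcal{Q}=-c\gamma$ on the disk, $c$ the sum of coefficients — forces $c=-p<0$, independently re-deriving Lemma~\ref{lem:coeff-hei}.

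I expect the main obstacle to be that this argument inherits the status of Conjecture~\ref{conj:sergio}: the $I^pC^q$ form with $p\ge 1$ is established only for the (over two thousand) known superbalanced facet HEIs, not proven for all facets of the HEC, so the proposition is only as strong as that structural conjecture. A secondary technical point needing care is the identity $-I_3(X:Y:Z)=\log\mathcal{D}$ when $X$, $Y$, $Z$ are \emph{unions} of elementary regions that may be individually disconnected: one must argue that the disk can always be re-partitioned and continuously deformed — without inducing a topological phase transition — so that each relevant lump becomes a single simply-connected region meeting the others only at triple points, after which the TQFT evaluation of Sec.~\ref{sec:tqft-tech} applies verbatim. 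Once these two points are granted, the remaining steps are elementary.
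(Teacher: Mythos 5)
Your proposal is correct and follows essentially the same route as the paper's own proof of this statement: decompose the facet HEI into the $I^pC^q$ tripartite form via Conjecture~\ref{conj:sergio}, observe that each conditional term is a difference of two topological multi-informations and hence vanishes on the disk, and conclude from $p\geq 1$ and $-I_3=\log\mathcal{D}\geq 0$ that $\mathcal{Q}\geq 0$. The caveats you flag (dependence on the unproven structural conjecture, and the need to justify the topological evaluation of $\mathcal{I}_{m,n}$ for composite, possibly disconnected lumped regions) are exactly the ones the paper itself inherits.
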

\begin{proof}
From proposition \ref{thm:hei-topological}, we have all facet HEIs are topological. We will now prove that they are also non-negative. 
Following section~\ref{sec:tqft-tech} and appendix~\ref{sec:review-tee}, we have $\gamma=\log{\mathcal{D}}$ on the $2d$ disk geometry. 

The conditional multi-information terms (denoted as $C$) in a facet HEI expressed in triparite form (Eq. \ref{eq:tripartite}) is a difference of two multi-information, both of which are topological and therefore, cancels each other. The only contribution comes from the multi-information terms (denoted as $I$). Since, $I$ is topological and $I\leq 0$ (by extension of proposition \ref{thm:In-topological}), and by proposition \ref{conj:sergio}, $p\geq1$, thus $\mathcal{Q}\geq 0$. Therefore, the facet HEIs are non-negative on a $2d$ disk geometry.
\end{proof}

\section{A Heuristic Proof of Superbalanced $\implies$ Topological}
In this appendix, we give a heuristic proof based on the techniques introduced in \cite{preskill-kitaev-tee} that superbalancedness information quantities are topological. Consider decomposing a information quantity $\mathcal{Q}$ into a null reduction $\mathcal{Q}_{\downarrow A_n}$ on some party $A_n$ such that \cite{Grimaldi:2025jad}
\begin{equation}\label{eq:null-reduce-on-An}
    \mathcal{Q}= \mathcal{Q}_{\downarrow A_n} + \overline{\mathcal{Q}_{\downarrow A_n}},
\end{equation}
where $\mathcal{Q}_{\downarrow A_n}$ contains all the terms including the party $A_n$, and $\overline{\mathcal{Q}_{\downarrow A_n}}$ contains all the terms excluding the party $A_n$. It was shown in \cite{Grimaldi:2025jad} that both $\mathcal{Q}_{\downarrow A_n}$ and $\overline{\mathcal{Q}_{\downarrow A_n}}$ are balanced when $\mathcal{Q}$ is superbalanced.

For example, if the information quantity $\mathcal{Q}$ is the MMI,
\begin{equation}
    \mathcal{Q}^{(MMI)}= S_{AB} + S_{AC} + S_{BC} - S_A - S_B - S_C - S_{ABC},
\end{equation}
then, the null reduction on $A$ is given by,
\begin{equation}\label{eq:mmi-null-A}
    \mathcal{Q}^{(MMI)}_{\downarrow A}= S_{AB} + S_{AC} - S_A - S_{ABC},
\end{equation}
and the remaining part of the information quantity is given by,
\begin{equation}
    \overline{\mathcal{Q}^{(MMI)}_{\downarrow A}}= S_{BC} - S_B - S_C.
\end{equation}

\label{app:heuristic-proof}
\begin{figure}
    \centering
    \includegraphics[width=0.75\linewidth]{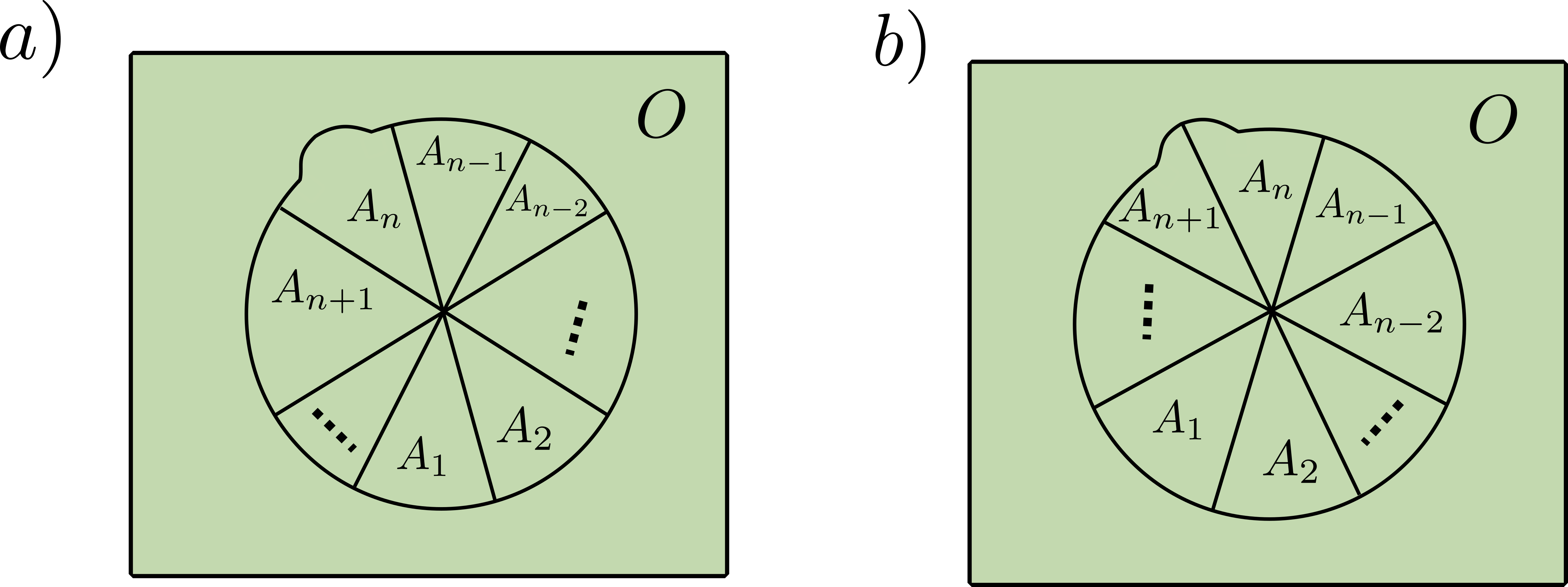}
    \caption{Two possible deformations of the region $A_n$ at (a) two-boundary junction, and (b) three-boundary junction.}
    \label{fig:placeholder}
\end{figure}

Coming back to generality, let us consider that there is a deformation in the region $A_n$. There are two possibilities that we are going to consider. First, the deformation is at a two-boundary junction. In this case, all terms containing $S_{A_n}$ gets replaced with $S_{A_n}+\Delta S_{A_n}$. We can write,
\begin{equation}\label{eq:null-reduce-on-An-delta}
    \Delta\mathcal{Q}= \Delta\mathcal{Q}_{\downarrow A_n} + \Delta\overline{\mathcal{Q}_{\downarrow A_n}}=\Delta\mathcal{Q}_{\downarrow A_n}=0,
\end{equation}
where $\Delta\overline{\mathcal{Q}_{\downarrow A_n}}=0$ because this collection does not contain any term involving $S_{A_n}$, and $\Delta\mathcal{Q}_{\downarrow A_n}=0$ due to balance. The second case is when the deformation is at a three-boundary junction, where one can argue similarly. However, one needs to be careful with the counting as the two decompositions do not vanish separately. We are not going to do that heuristic proof here and instead give an anyonic derivation below.

\section{Anyonic Derivation of Superbalanced $\implies$ Topological}\label{app:anyonic-topo-sb}

Consider the geometry we have i.e. a disk with $n-$slices of pizza similar to Fig.~\ref{fig:odd_number_sub_regions} but with $n$ number of slices. Next, we smoothly deform it to consist only vertices that are triple point. A triple point $t_{j}$ involves exactly three regions from the set $\{A_{1},\cdots,A_{n},O\}$ and we call them $\{R_{j}^{(1)},R_{j}^{(2)},R_{j}^{(3)}\}$. 

\begin{definition}
For a subset $I\subseteq\{A_{1},\cdots,A_{n},O\}$ and
triple point $t_{j}$, define the indicator function
\begin{equation}
\sigma_{j}(I):=\begin{cases}
1 & \text{if \ensuremath{t_{j}}}\text{ lies on the boundary of region \ensuremath{I}}\\
0 & \text{otherwise }
\end{cases}
\end{equation}
\end{definition}
Physically this captures the following on the disk with $O$ being
the exterior:
\begin{enumerate}
\item $\sigma_{j}(I)=0$ if all the three regions $\{R_{j}^{(1)},R_{j}^{(2)},R_{j}^{(3)}\}$ at $t_{j}$ are subregions
contained in $I$ i.e., $\{R_{j}^{(1)},R_{j}^{(2)},R_{j}^{(3)}\} \subseteq I$ or if all the three regions $\{R_{j}^{(1)},R_{j}^{(2)},R_{j}^{(3)}\}$ at $t_{j}$ are in
the complement i.e.,  $\{R_{j}^{(1)},R_{j}^{(2)},R_{j}^{(3)}\}\subseteq \{A_{1},\cdots,A_{n},O\}\setminus I$.
\item $\sigma_{j}(I)=1$ otherwise, referring to the case where the triple
$t_{j}$ lies at the boundary of $I$ and its complement.
\end{enumerate}
\begin{definition}
An information quantity $\mathcal{Q}=\sum_{I}a_{I}S(I)$
is called $2-$balanced or superbalanced for every single party $A_{k} \in \{A_{1},\cdots,A_{n},O\}$
we have,
\begin{equation}
\sum_{A_{k}\subseteq I}a_{I}=0\,,
\end{equation}
and for every pair $A_{k}$, $A_{l}$ with $k\neq l$, we have,
\begin{equation}
\sum_{\{A_{k},A_{l}\}\subseteq I}a_{I}=0\,.
\end{equation}
\end{definition}
For later convenience, we also define,
\begin{equation}
c:=\sum_{I}a_{I} \,.
\end{equation}
The Von-Neumann entropy for a $2d$ disk geometry is given as follows,
\begin{align}
\mathcal{Q} & =\sum_{I}a_{I}\Big[(p_{I}-1)\log\mathcal{D}-\frac{p_{I}}{2}\mathcal{K}\Big] \\ 
&=\log\mathcal{D}\cdot\sum_{I}a_{I}(p_{I}-1)-\frac{1}{2}\Big[\sum_{I}a_{I}p_{I}\Big]\cdot\mathcal{K}\,,
\end{align}
where $p_{I}$ refers to the number of triple points on the boundary of the region $I$. For our purpose we also assume that the region $I$ is made up of the elements from the set $\{A_{1},\cdots,A_{n}\}$. For proving that $\mathcal{Q}$ is topological entanglement entropy,
we must prove,
\begin{equation}
\sum_{I}a_{I}p_{I}=0\,.
\end{equation}
Note that we can also write,
\begin{equation}
\sum_{I}a_{I}p_{I}=\sum_{I}a_{I}\sum_{j=1}^{T}\sigma_{j}(I)=\sum_{j=1}^{T}\sum_{I}a_{I}\sigma_{j}(I)\,,
\end{equation}
where $T$ is the total number of triple points in the $2d$ disk geometry. Next, note that each of the triple point is either made up of regions
$\{O,A_{a},A_{b}\}$ or of the regions $\{A_{a},A_{b},A_{c}\}$. We shall address these two cases one by one.
\subsubsection*{Case I: Triple point $t_{j}$ is the intersection of regions of the type $\{O,A_{a},A_{b}\}$}
Now, the triple point $t_{j}$ lies on the boundary if either $A_{a}\subseteq I$
or $A_{b}\subseteq I$. This implies, we have,
\begin{equation}
\sigma_{j}(I)=\begin{cases}
0 & \text{if }A_{a}\not\subseteq I\text{ and }A_{b}\not\subseteq I\\
1 & \text{if }A_{a}\subseteq I\text{ or }A_{b}\subseteq I
\end{cases}
\end{equation}
Hence, we have,
\begin{equation}
\sum_{I}a_{I}\sigma_{j}(I)=\sum_{I}a_{I}-\sum_{I;A_{a}\not\subseteq I;A_{b}\not\subseteq I}a_{I} \,,
\end{equation}
where we have used the substraction scheme. Let us use set theory to
calculate the following,
\begin{align}
\sum_{I;A_{a}\not\subseteq I;A_{b}\not\subseteq I}a_{I} & =\sum_{I}a_{I}-\cancelto{0}{\sum_{A_{a}\subseteq I}a_{I}}-\cancelto{0}{\sum_{A_{b}\subseteq I}a_{I}}+\cancelto{0}{\sum_{\{A_{a},A_{b}\}\subseteq I}a_{I}}\,,\\
 & =\sum_{I}a_{I}\,,\\
 & =c\,,
\end{align}
where we have used the fact information quantity $\mathcal{Q}$ is
superbalanced. Hence, we have the following,
\begin{align}
\sum_{I}a_{I}\sigma_{j}(I) & =\sum_{I}a_{I}-\sum_{I;A_{a}\not\subseteq I;A_{b}\not\subseteq I}a_{I}\,,\\
 & =\sum_{I}a_{I}-\sum_{I}a_{I}\,,\\
\sum_{I}a_{I}\sigma_{j}(I) & =0\,.
\end{align}
This implies we have,
\begin{equation}
\sum_{I}a_{I}p_{I}=\sum_{I}a_{I}\sum_{j=1}^{T}\sigma_{j}(I)=\sum_{j=1}^{T}\sum_{I}a_{I}\sigma_{j}(I)=0\,.
\end{equation}

\subsubsection*{Case II: Triple point $t_{j}$ is the intersection of regions of
the type $\{A_{a},A_{b},A_{c}\}$}

In this case the triple point $t_{j}$ is on the boundary unless $\{A_{a},A_{b},A_{c}\}\subseteq I$
or $\{A_{a},A_{b},A_{c}\}\cap I=\emptyset$. Concretely we can write,
\begin{equation}
\sigma_{j}(I)=\begin{cases}
0 & \text{if }\{A_{a},A_{b},A_{c}\}\subseteq I\text{ or }\{A_{a},A_{b},A_{c}\}\cap I=\emptyset\\
1 & \text{otherwise}
\end{cases}
\end{equation}
Next, we can write the following,
\begin{equation}
\sum_{I}a_{I}\sigma_{j}(I)=\sum_{I}a_{I}-\sum_{\{A_{a},A_{b},A_{c}\}\subseteq I}a_{I}-\sum_{\{A_{a},A_{b},A_{c}\}\cap I=\emptyset}a_{I}
\end{equation}
note that there is no intersection part because they mutally exclusive.
Next, note the following,
\begin{align}
\sum_{\{A_{a},A_{b},A_{c}\}\cap I=\emptyset}a_{I} & =\sum_{I}a_{I}-\cancelto{0}{\sum_{A_{a}\subseteq I}a_{I}}-\cancelto{0}{\sum_{A_{b}\subseteq I}a_{I}}-\cancelto{0}{\sum_{A_{c}\subseteq I}a_{I}} \,,\\
 & \quad+\cancelto{0}{\sum_{\{A_{a},A_{b}\}\subseteq I}a_{I}}+\cancelto{0}{\sum_{\{A_{a},A_{c}\}\subseteq I}a_{I}}+\cancelto{0}{\sum_{\{A_{b},A_{c}\}\subseteq I}a_{I}}\,,\\
 & \quad-\sum_{\{A_{a},A_{b},A_{c}\}\subseteq I}a_{I}\,,\\
 & =\sum_{I}a_{I}-\sum_{\{A_{a},A_{b},A_{c}\}\subseteq I}a_{I}\,,
\end{align}
where we used the fact that information quantity $Q$ is superbalanced. This implies, we have,
\begin{align}
\sum_{I}a_{I}\sigma_{j}(I) & =\sum_{I}a_{I}-\sum_{\{A_{a},A_{b},A_{c}\}\subseteq I}a_{I}-\sum_{\{A_{a},A_{b},A_{c}\}\cap I=\emptyset}a_{I}\,,\\
 & =\cancel{\sum_{I}a_{I}}-\cancel{\sum_{\{A_{a},A_{b},A_{c}\}\subseteq I}a_{I}}-\cancel{\sum_{I}a_{I}}+\cancel{\sum_{\{A_{a},A_{b},A_{c}\}\subseteq I}a_{I}}\,,\\
 & =0\,.
\end{align}
Hence using case I and case II, we find the the quantity,
\begin{equation}
\sum_{I}a_{I}\sigma_{j}(I)=0\Rightarrow\sum_{j=1}^{T}\sum_{I}a_{I}\sigma_{j}(I)=\sum_{I}a_{I}p_{I}=0\,.
\end{equation}
and therefore the information superbalanced quantity $\mathcal{Q}=\sum_{I}a_{I}S(I)$
is topological.

\end{widetext}

\end{appendix}

\bibliography{ref}

\end{document}